%% file: main.tex
\documentclass[]{JFM-FLM_Au}

\input{dave_notation}
\crefname{equation}{}{}

\newcolumntype{M}{>{\begin{varwidth}{4cm}}l<{\end{varwidth}}} %M is for Maximal column

\lefttitle{D. Darrow, E. Carlson, \& D. Goluskin}

\title{Quartic Lyapunov functions for global fluid stability}

\author{David Darrow\aff{1}, Elizabeth Carlson\aff{2}, David Goluskin\aff{3}}

\affiliation{\aff{1}Department of Mathematics, Massachusetts Institute of Technology, Cambridge, MA, 02139, USA
	\aff{2} Department of Mathematics, Oregon State University, Corvallis, OR, 97331, USA
	\aff{3} Department of Mathematics and Statistics, University of Victoria, Victoria, BC, V8P 5C2,
	Canada} 

\corresau{David Darrow, \email{ddarrow@mit.edu}}

\begin{document}
	\maketitle
	
	\begin{abstract}
 A fluid system is `globally stable' if all initial conditions eventually converge to the same state. Since \citet{reynolds_iv_1895} and \citet{orr_stability_1907}, the standard way to show global stability has been the energy method, which employs the fluctuation energy of the flow as a Lyapunov function. However, the energy method fails whenever transient energy growth is possible, so it often yields overly strict stability criteria. The first broadly applicable approach that can overcome this limitation has recently been introduced \citep{goulart_global_2012,fuentes_global_2022}, using polynomial optimization to numerically construct non-quadratic Lyapunov functions of certain forms. Unlike the energy method, however, this approach is highly technical, computationally expensive, and has been hard to interpret physically. Moreover, it treats only one set of parameters at a time; in particular, if it verifies global stability for a given system at a certain value of the Reynolds number, it does not imply the same for all smaller values. The present work makes progress by connecting this numerical program with new analytical and physical insights. We first show how to exploit symmetries of shear flows via convenient complex variable representations, greatly reducing the problem size. We then refine key inequalities in the previous works, replace several expensive computational steps with simpler analytical alternatives, and show how to prove global stability over a continuous range of Reynolds numbers. Our analysis reveals the simplest possible class of non-quadratic Lyapunov functions for two-dimensional parallel shear flows: a particular three-parameter family of quartic polynomials. Using these Lyapunov functions, we verify global stability of two-dimensional plane Couette flow and plane Poiseuille flow up to modestly higher Reynolds numbers than possible with the energy method. We explain---mathematically and physically---how Lyapunov functions in this class are able to decay monotonically despite transient energy growth. Our work takes a step towards a fully analytical theory of global fluid stability beyond the energy method, and it offers new structural insights that should significantly improve future numerical investigations of global stability.
	\end{abstract}

	\begin{keywords}
		%Authors should not enter keywords on the manuscript, as these must be chosen by the author during the online submission process and will then be added during the typesetting process (see \href{https://www.cambridge.org/core/journals/journal-of-fluid-mechanics/information/list-of-keywords}{Keyword PDF} for the full list).  Other classifications will be added at the same time.
	\end{keywords}
	
	%{\bf MSC Codes }  76A20; 76D45; 76S05

	\section{Introduction}\label{sec:intro}
	
	A fluid system is globally (asymptotically) stable if, no matter how or how strongly the fluid is initially perturbed, it always returns to the same base state. There are limited tools available to prove that a given system is globally stable. For one, standard techniques of local stability analysis are insufficient to prove global stability; linear instability of any steady flow precludes global stability, but linear stability says nothing either way. The laminar states of three-dimensional (3-D) plane Couette flow and pipe flow are linearly stable at all Reynolds numbers $\nRe$, for instance, but both systems lose global stability at $\nRe$ values smaller than $10^3$ \citep{waleffe_homotopy_2003,Pringle_2007}.
	
	Until recently, the only approach available to prove global stability was the `energy method' \citep{joseph_stability_1976}, first introduced by \citet{reynolds_iv_1895} and \citet{orr_stability_1907}. The energy method investigates the fluctuation energy of the flow, i.e., the energy of deviations $\vec{u}(\vec{x},t)$ from the base state, expressed as the following volume integral:
	\begin{equation}\label{eq:energy_primitive}
		E[\vec{u}] = \tfrac{1}{2}\|\vec{u}\|^2 = \tfrac{1}{2}\int |\vec{u}|^2.
	\end{equation}
    If one can show that $E$ decreases monotonically in time from any nonzero initial state, then global stability follows. For nearly any incompressible fluid system, the energy method verifies global stability up to a system-dependent energy Reynolds number $\nRe_E>0$ \citep{serrin_stability_1959}. In other words, the energy method yields a lower bound
    \begin{equation}\label{eq:ReRg}
\nRe_E<\nRe_G
    \end{equation}
on the global stability Reynolds number $\nRe_G$, i.e., the smallest Reynolds number at which the flow is not globally stable.
    
Unfortunately, the energy method's criterion for success---that no initial condition can lead to transient growth of $E$---is often overly strict, so the resulting bound \cref{eq:ReRg} is very conservative for many systems. For instance, in 3-D plane Couette flow with unit layer thickness, the energy method yields $\nRe_E \approx 82.6$ \citep{joseph_nonlinear_1966}, but non-laminar stationary states have been found only when $\nRe\gtrsim 511$, suggesting that $\nRe_G\approx511$ \citep{waleffe_homotopy_2003}. A similar gap exists for other parallel shear flows, including pipe flow and Poiseuille flow \citep{joseph_stability_1969,waleffe_homotopy_2003,Pringle_2007}. In 2-D plane Poiseuille flow with unit layer thickness, the energy method proves global stability below $\nRe_E\approx175.1$, while non-laminar stationary states have been found only when $\nRe\gtrsim 5878$ \citep{Casas_2012}. Strikingly, in 2-D plane Couette flow, the energy method shows global stability up to $\nRe_E \approx 177.2$ \citep{orr_stability_1907}, but extensive numerical investigations have failed to find sustained non-laminar solutions at any $\nRe$, suggesting that $\nRe_G=+\infty$ \citep{orszag_transition_1980,rincon_existence_2007,ehrenstein_two-dimensional_2008}.
	
One can hope to surpass the energy method by replacing $E[\vec{u}]$ with a different Lyapunov function---i.e., a different functional $V[\vec{u}]$ that is positive and decays monotonically from any nonzero initial state. If such a Lyapunov function exists at a given $\nRe$, the system is globally stable at that Reynolds number \citep{mironchenko_non-coercive_2019}. The energy method amounts to selecting $V[\vec{u}]=E[\vec{u}]$, which is a Lyapunov function only when $\nRe<\nRe_E$.
	
Even for simple fluid systems, however, it is very challenging to construct any Lyapunov function other than $E[\vec{u}]$ itself. The key element that makes the energy method tractable is that both $E[\vec{u}]$ and its time derivative $\dot{E}[\vec{u}]$ are quadratic functionals of $\vec{u}$; no cubic terms arise in $\dot{E}[\vec{u}]$ because $E$ is conserved by nonlinear advection. The maximum energy growth rate $\lambda_1=\max\dot{E}[\vec{u}]/E[\vec{u}]$ can thus be computed by solving a linear eigenproblem, and the energy method succeeds if and only if $\lambda_1<0$. 
    
Other quadratic functionals can serve as Lyapunov functions only in special cases. If one restricts to perturbations that are uniform in one direction, for instance, then a reweighted energy can be used in the energy method \citep{joseph_stability_1976,galdi_new_1990,straughan_energy_2003,kaiser_generalized_2005}. In 2-D flows with no boundaries or with fixed-stress boundary conditions, the enstrophy can also serve as a Lyapunov function \citep{iudovich_example_1965}. 
% It is currently unknown whether enstrophy can act as a Lyapunov function under no-slip boundary conditions \citep{Fraternale2018}, but, even if so, viscous vorticity generation at the boundary renders it comparatively difficult to analyze. 
For typical incompressible fluid systems, however, such quantities are not valid Lyapunov functions at any $\nRe$, and the fluctuation energy $E[\vec{u}]$ is the only possible quadratic Lyapunov function.
	
	%To move beyond the energy method and its variants, then, one has to make use of a non-quadratic functional $V[\vec{u}]$. Finding global extrema of a non-quadratic functional of $\vec{u}$ is not generally tractable, even numerically, so it is difficult to verify that a given non-quadratic functional satisfies $\dot{V}[\vec{u}]<0$. As such, the energy method has generally remained the only feasible route to proving global stability.
	
\cite{goulart_global_2012} proposed a computational strategy to construct non-quadratic Lyapunov functions for fluids, a problem that had appeared intractable. Briefly, one fixes a particular set of velocity modes $\vec{u}_1,...,\vec{u}_N$ and decomposes the velocity field as
	\begin{equation}
		\vec{u}(\vec{x},t) = \sum_{j=1}^N a_j(t)\vec{u}_j(\vec{x}) + \vec{v}(\vec{x},t).
	\end{equation}
Letting $q^2 = \tfrac{1}{2}\|\vec{v}\|^2$ denote the energy of the `tail' $\vec{v}$, one seeks a Lyapunov function of the form
	\begin{equation}
    \label{eq:goulart_ansatz}
		V[\vec{u}] = E[\vec{u}]^2 + P(a_1,...,a_N,q^2),
	\end{equation}
where $P$ is a cubic polynomial. Since $E=\frac{1}{2}|\vec{a}|^2+q^2$, the full functional is a quartic polynomial in the $N+1$ variables $(a_j,q)$, so there exist computational tools (discussed shortly) for enforcing $V\geq 0$. However, the time derivative $\dot{V}[\vec{u}]$ depends generally on the infinite-dimensional tail $\vec{v}$, making it intractable to directly check whether $V$ decreases in time for all possible $\vec{u}$. Instead, through detailed analysis of the particular system under consideration, one attempts to find a quartic polynomial $S=S(a_j,q^2)$ such that the bound
	\begin{equation}\label{eq:intro_polybound}
		\dot{V}[\vec{u}] = \dot{V}(a_1,...,a_N,\vec{v}) \leq S(a_1,...,a_N,q^2)
	\end{equation}
	holds for all $\vec{a}\in\mathbb{R}^N$ and all possible tails $\vec{v}$. At this point, global stability follows if one can show that $V(a_j,q^2)>0$ and $S(a_j,q^2)<0$ whenever $(a_j,q^2)\neq 0$. This is now a concrete, finite dimensional question: given a set of quartic polynomials in $N+1$ variables, can one verify that they are all positive away from zero? Although this problem is NP-hard as stated \citep{murty_np-complete_1987}, non-negativity can be enforced by stronger but more tractable sum-of-squares (SOS) conditions, which can be verified using semidefinite programming when $N$ is not too large \citep{parrilo_semidefinite_2003}.
	
The computational strategy of \citeauthor{goulart_global_2012} has yielded promising results so far. It was applied by \citet{huang_sum--squares_2015} to find a quartic Lyapunov function for streamwise-constant perturbations in a fully periodic version of rotating Couette flow, a system where a weighted energy already gives the exact value of $\nRe_G$. It was then refined by \cite{fuentes_global_2022} and applied to 2-D plane Couette flow at various choices of the streamwise period, yielding the first verification of global stability for any system beyond what can be shown by the energy method. The refined formulation was also applied to 2-D plane Poiseuille flow by \citet{iligaray_improved_2026} with a similar degree of success.  The computations of \citeauthor{fuentes_global_2022} and \citeauthor{iligaray_improved_2026} do not yet constitute computer-assisted proofs, but one could rigorously verify the numerics with the aid of interval arithmetic \citep{Kearfott1996,Parker_2024}.
% \footnote{In the linearized dynamics of 2D Couette about the laminar state, transient growth of enstrophy has been ruled out up to certain Re [cite], but that no implication of this result for the nonlinear dynamics has been shown, despite claims to the contrary by the authors of that study.}
	
There are several obstacles preventing the methods in \citet{fuentes_global_2022} and \citet{iligaray_improved_2026} from becoming a general purpose approach to global stability. For one, the required computations are expensive and must be repeated for each domain size and $\nRe$ value of interest. In particular, even if their algorithm verifies global stability at a given Reynolds number $\nRe^*$, it does not imply global stability for $\nRe<\nRe^*$, and thus does not directly yield a lower bound on $\nRe_G$. Computation time also grows rapidly with the number of modes $N$, and the required value of $N$ grows rapidly with $\nRe$.  As such, only mild improvements over the energy method have been possible thus far. Perhaps most importantly, the SOS algorithm is largely a black box from a physical viewpoint. Although the resulting functional $V[\vec{u}]$ is guaranteed to be a valid Lyapunov function, the polynomials $V[\vec{u}]$ and $S(a_j,q^2)$ together contain up to $O(N^4)$ monomial terms and are difficult to interpret. Furthermore, additional SOS conditions are used to verify the inequality \cref{eq:intro_polybound}, so even the relationship between $\dot{V}$ and $S$ is somewhat obfuscated.

    Here, we make progress by connecting this numerical program with new analytical and physical insights. We first introduce a simple schematic for quartic Lyapunov functions in general incompressible fluid systems, illustrating how potential energy reservoirs and local stability basins can conspire to control transient energy growth when $\nRe>\nRe_E$. We then discuss how system symmetries can be exploited to restrict the space of possible Lyapunov functions, as well as to reduce the computation cost of any resulting SOS calculations. We employ a convenient complex variable representation to carry this program out for 2-D shear flows, allowing us to fold streamwise translations and up-down symmetries into our notation and thereby greatly reduce the problem size. Connecting these symmetry considerations with our general schematic, we arrive at the simplest possible class of non-quadratic Lyapunov functions for 2-D planar shear flows:\begin{equation}\label{eq:lyapunov_primitive}
		V[\vec{u}] = E[\vec{u}]^2 + 2\gamma_0E[\vec{u}]\int \vec{u}\cdot\vec{u}_0 + \gamma_1E[\vec{u}] + 2\gamma_2\Re\left[\int \vec{u}\cdot\ovl{\vec{u}_1}\int\vec{u}\cdot\vec{u}_2\right],
	\end{equation}
where $\vec{u}_0$, $\vec{u}_1$, and $\vec{u}_2$ are particular complex-valued velocity modes, the parameters $\gamma_0,\gamma_1,\gamma_2\in\RR$ are chosen depending on the streamwise period, and $\Re[z]$ denotes the real part of a complex number $z$.

We then use Lyapunov functions of the class \cref{eq:lyapunov_primitive} to recover new global stability results for 2-D plane Couette and Poiseuille flows. In so doing, we develop interpretable, analytical alternatives to many of the key inequalities used by \citet{fuentes_global_2022}. In particular, we replace all of the expensive, matrix-valued SOS problems embedded in the inequality \cref{eq:intro_polybound} with appropriate versions of Young's inequality, and we prove that the replacement is no less sharp than the original black box approach in many cases (including all that arise in our analysis). Moreover, we show how to simultaneously verify global stability for all Reynolds numbers below a desired threshold, so our analysis yields true lower bounds on $\nRe_G$, rather than global stability only at particular Reynolds numbers. To our knowledge, these are the first lower bounds on $\nRe_G$ for any fluid system that are larger than $\nRe_E$.

Most steps of our analysis are done by hand, aside from numerical solution of a linear eigenproblem and a small SOS computation to verify positivity of a single polynomial. This procedure is more interpretable---and far less computationally demanding---than previous SOS computations. The new lower bounds we find on $\nRe_G$ for 2-D Couette flow are, depending on the streamwise period, up to $15\%$ larger than $\nRe_E$ and up to $3\%$ larger than the Reynolds numbers at which \citet{fuentes_global_2022} verified global stability with the same choice of velocity modes $\vec{u}_n$. Although our global stability results for particular flows only modestly surpass the energy method, our work paves the way for future progress on two fronts. First, for numerical approaches where one hopes to scale to many $\vec{u}_n$ modes, we offer ways to greatly reduce the computational cost relative to \citeauthor{fuentes_global_2022} and \citet{iligaray_improved_2026}. Second, our various observations and analytical derivations take a step towards a fully analytical theory of global stability beyond the energy method.
	
In \cref{sec:interpretation}, we review the mathematics of the energy method and introduce a general schematic for quartic Lyapunov functions in incompressible flows. We then explain how such functionals can control transient energy growth, illustrating the principle with a finite-dimensional toy model. In \cref{sec:preliminaries}, we formulate stability analysis of 2-D shear flows using vorticity--streamfunction variables. In \cref{sec:symmetries}, we prove a simple lemma restricting the space of candidate Lyapunov functions based on symmetry, leading to \cref{eq:lyapunov_primitive} as the simplest possible ansatz for 2-D shear flows. In \cref{sec:analysis}, we prove an analytical bound of the form \cref{eq:intro_polybound} for our simple Lyapunov function, and we use it to verify global stability for 2-D plane Couette flow and Poiseuille flow beyond $\nRe_E$ for a variety of streamwise periods. In \cref{sec:conclusion}, we discuss how our work may improve future numerical and analytical studies of global stability.
	
\section{Global stability beyond the energy method}
\label{sec:interpretation}

Consider a bounded 2-D or 3-D domain $\Omega$, which may have one or more periodic directions. For any stationary velocity field $\vec{U}(\vec{x})$ satisfying the incompressible Navier--Stokes equations on $\Omega$, any deviations $\vec{u}(\vec{x},t)$ from $\vec{U}$ evolve according to
	\begin{equation}
    \label{eq:NS_primitive}
		\vec{u}_t + (\vec{u}+\vec{U})\cdot\nabla\vec{u} + \vec{u}\cdot\nabla\vec{U} = -\nabla p + \nu\nabla^2\vec{u},\qquad \nabla\cdot\vec{u} = 0,
	\end{equation}
where $\nu=1/\nRe$ is a dimensionless viscosity. We assume $\vec{u}$ satisfies either no-slip or no-stress conditions on each part of the boundary $\partial\Omega$.

\subsection{The energy method}
\label{sec:energymethod}

We first review the energy method, whose results are used in our more involved approach. For convenience, we write $\langle a\rangle = \int a$ for the integral of any (scalar or vector) field $a$ over the domain $\Omega$, and we write 
\begin{equation}\label{eq:innerprod}\langle f,g\rangle = \langle fg\rangle,\qquad \langle \vec{u},\vec{v}\rangle =\langle\vec{u}\cdot\vec{v}\rangle,\qquad \langle \vec{T},\vec{R}\rangle = \langle \op{tr}(\vec{T}^t\vec{R})\rangle
\end{equation}
for the real $L^2$ inner product between fields of scalars, vectors, or matrices, respectively. Here and below, we write $\vec{T}^t$ for the real transpose of any matrix $\vec{T}$. 
    
If $\vec{u}(x,t)$ is a time-evolving solution to \cref{eq:NS_primitive}, then the instantaneous time derivative $\dot{E}[\vec{u}]$ of the energy can be computed as follows:
	\begin{equation}
		\dot{E}[\vec{u}] = \langle\vec{u},\vec{u}_t\rangle = \langle\vec{u},-(\vec{u}+\vec{U})\cdot\nabla\vec{u} - \vec{u}\cdot\nabla\vec{U} - \nabla p + \nu\nabla^2\vec{u}\rangle.
	\end{equation}
Integrating by parts and exploiting the incompressibility of $\vec{u}$ and $\vec{U}$, one can simplify this expression greatly:
	\begin{equation}\label{eq:edot_primitive}
		\dot{E}[\vec{u}] = -\langle\vec{u}\cdot\nabla\vec{U}\cdot\vec{u}\rangle - \nu\|\nabla\vec{u}\|^2.
	\end{equation}
Because the right-hand side of this equation is quadratic in the velocity field, one can bound it above by a scalar multiple of $E[\vec{u}]$:
	\begin{equation}\label{eq:edot_to_e}
		\dot{E}[\vec{u}] \leq 2\lambda_1E[\vec{u}],\qquad \lambda_1 = \max_{\substack{\vec{w}\in H^1_0\\\nabla\cdot\vec{w}=0}} \left[\frac{-\langle\vec{w}\cdot \nabla\vec{U}\cdot\vec{w}\rangle - \nu\|\nabla\vec{w}\|^2}{\|\vec{w}\|^2}\right],
	\end{equation}
where $H_0^1$ is a standard (Sobolev) space of vector fields satisfying $\vec{w}|_{\partial\Omega}=0$ and $\vec{w},\nabla\vec{w}\in L^2$. Crucially, although the inequality \cref{eq:edot_to_e} is stated in terms of time-evolving velocity fields $\vec{u}$, the maximization problem is defined in terms of time-independent vector fields $\vec{w}$, so its solution $\lambda_1$ is time-independent as well. The value of $\lambda_1$ can be calculated as the leading eigenvalue of the associated Euler--Lagrange equation,
	\begin{equation}\label{eq:eullag}
		\tfrac{1}{2}(\nabla\vec{U} + \nabla\vec{U}^t)\vec{w} - \nu\nabla^2\vec{w} +\nabla p = \lambda\vec{w}, \qquad \nabla\cdot\vec{w}=0,
	\end{equation}
where $p(\vec{x})$ is the Lagrange multiplier for the divergence-free constraint. On a bounded domain, there are countably infinite pairs $(\lambda_j,\vec{w}_j)$ that solve \cref{eq:eullag}, all real-valued. These solutions are known as the system's `energy eigenvalues' and `energy eigenmodes', respectively. With very mild regularity assumptions on $\vec{U}$ and $\Omega$, there is a unique maximum eigenvalue $\lambda_1$ for this system, and it is negative when $\nRe=1/\nu$ is sufficiently small. It increases as $\nRe$  increases, and there exists $\nRe_E>0$ such that $\lambda_1<0$ if and only if $\nRe<\nRe_E$. Applying Gr\"onwall's inequality \citep{ames1997inequalities} to the inequality in \cref{eq:edot_to_e} gives
	\begin{equation}
		E[\vec{u}(\cdot,t)] \leq E[\vec{u}(\cdot,0)] e^{2\lambda_1 t}
        \label{eq:E ineq}
	\end{equation}
	for any initial field $\vec{u}$. When $\nRe<\nRe_E$, the right-hand side of~\cref{eq:E ineq} decays to zero as $t\to\infty$, implying global stability.

	%Already, one can see the basic principles of the energy method at play in \cref{eq:edot_primitive}. Suppose $\vec{u}$ has a characteristic magnitude $u$ and a length scale $\ell$, and that $\vec{U}$ is of unit magnitude and of length scale $L$. Scaling the equation \cref{eq:edot_primitive}, we find that viscous dissipation should dominate the value of $\dot{E}[\vec{u}]$ whenever 
	%\[\ell\ll \sqrt{\nu L},\]
	%and thus that $\dot{E}[\vec{u}]<0$ in this regime. Of course, since $\Omega$ is bounded, we must have $\ell\lesssim L$. If one chooses $\nu$ sufficiently large (for instance, $\nu\gg \sqrt{L}$), then it follows that $\dot{E}[\vec{u}]<0$ for any velocity field $\vec{u}$. On the other hand, if energy is always monotonically decreasing (sufficiently quickly, at least), then sustained non-laminar solutions are impossible, and global stability follows.

	\subsection{Polynomial Lyapunov functions}
As discussed in \cref{sec:intro}, the energy Reynolds number $\nRe_E$ is often a very conservative lower bound on the Reynolds number $\nRe_G$ at which a given system loses global stability. By definition, $\nRe_E$ is the smallest Reynolds number at which some initial velocity field leads to energy growth. The energy method cannot distinguish, for instance, between transient energy growth and the onset of sustained turbulence. To improve upon the energy method, we will replace the energy with a more complicated Lyapunov function. 
    
    First, throughout this work, we say that $V[\vec{u}]$ is a `functional' if it maps spatial functions $\vec{u}(\vec{x})$ to $\RR$. We say that a functional $V[\vec{u}]$ is `polynomial' in $\vec{u}$ if it can be decomposed as
    \begin{equation}
        V[\vec{u}] = V_0 + V_1[\vec{u}] + V_2[\vec{u},\vec{u}] + \cdots + V_M[\vec{u},...,\vec{u}]
    \end{equation}
    for some $M\geq 0$, where each $V_m$ is linear in each of its arguments. We call such $V$ `quadratic', `cubic', or `quartic' if $M=2$, $3$, or $4$, respectively. Next, for any functional $V[\vec{u}]$, we define $\dot{V}[\vec{u}]$ as the so-called Lie derivative along the flow:
	\begin{equation}
    \label{eq:lie derivative}
    \dot{V}[\vec{u}] = \cL V[\vec{u}] = \tfrac{\d}{\d t}V[\tilde{\vec{u}}(\cdot,t)]\big|_{t=0},
    \end{equation}
	where $\tilde{\vec{u}}(\vec{x},t)$ solves \cref{eq:NS_primitive} with $\tilde{\vec{u}}(\vec{x},0)=\vec{u}(\vec{x})$. We will refer to the $\dot V$ functional as the `time derivative' of $V$ because it satisfies~\cref{eq:lie derivative}, but we emphasize that $V[\vec u]$ and $\dot V[\vec u]$ on their own are time-independent functionals of $\vec{u}$. For instance, $\dot{E}[\vec{u}]$ can be defined by the time-independent expression \cref{eq:edot_primitive}. %Time dependence comes only from considering $V$ along time-evolving dynamics.

    \begin{remark}\label{rem:polynomials}
        Because the Navier--Stokes equations have only linear and quadratic terms, taking the time derivative of any polynomial functional $V[\vec{u}]$ gives a functional $\dot{V}[\vec{u}]$ that is also polynomial with degree at most one higher than $V[\vec{u}]$. A crucial property of the energy is that $E[\vec{u}]$ and $\dot E[\vec{u}]$ are both quadratic. The nonlinear term of the Navier--Stokes equations does not produce a cubic term in $\dot E[\vec{u}]$ because $E[\vec{u}]$ is an Euler invariant \citep{Olver1993}, meaning it is conserved by nonlinear advection.
    \end{remark}
    
For a Lyapunov function to imply global stability, the exact conditions it must satisfy vary slightly between different types of governing dynamics \citep{mironchenko_non-coercive_2019}. Here, we define Lyapunov functions as having the following properties, which suffice for the Navier--Stokes equations with polynomial $V[\vec{u}]$.
	\begin{definition}\label{def:lyapunov}
		A polynomial functional $V$ is a `Lyapunov function' for the system \cref{eq:NS_primitive} if it satisfies the following properties.\vspace{0.05in}
		\begin{enumerate}
			\item It vanishes for the zero vector field: $V[0] = 0$.\vspace{0.05in}
			\item It is `positive definite': $V[\vec{u}] > 0$ if $\vec{u}\neq 0$.\vspace{0.05in}
            \item Its time derivative is `negative definite': $\dot{V}[\vec{u}]<0$ if $\vec{u}\neq 0$.
		\end{enumerate}
	\end{definition}
    If a Lyapunov function exists for \cref{eq:NS_primitive} at a given Reynolds number, it follows that the system is globally stable at that Reynolds number. The energy method can be seen as the choice $V[\vec{u}]=E[\vec{u}]$, which satisfies the first two criteria of \cref{def:lyapunov} at all $\nRe$ and satisfies the third criterion if and only if $\nRe<\nRe_E$.
% We note that, in more general settings, Lyapunov functions are defined to satisfy additional technical criteria---for instance, the global stability argument fails if $V$ is allowed to decay to zero as $\|\vec{u}\|\to\infty$. These relatively exotic non-convergent behaviors are excluded by considering only polynomial functionals for the Navier--Stokes equations. For more detail on general Lyapunov functionals for general partial differential equations, we direct the reader to the work of \citet{mironchenko_non-coercive_2019}.

	\subsection{A general schematic for quartic Lyapunov functions}
    \label{sec:schematic}
	We propose the following schematic for quartic Lyapunov functions in the general setting of incompressible flows:
\begin{equation}
\label{eq:lyapunov_proposed}
		V[\vec{u}] = E[\vec{u}]^2 + 2E[\vec{u}]\langle\vec{W},\vec{u}\rangle  + Q[\vec{u}].
	\end{equation}
	Here, $E[\vec{u}]=\tfrac{1}{2}\|\vec{u}\|^2$ is the energy, $\vec{W}$ is a chosen incompressible vector field, and $Q[\vec{u}]$ is a reweighted energy (i.e., a positive definite quadratic form) that decays monotonically under the linearized Navier--Stokes equations. The simple family \cref{eq:lyapunov_primitive} given in \cref{sec:intro} is a special case of~\cref{eq:lyapunov_proposed} with
\begin{equation}\vec{W} = \gamma_0\vec{u}_0,\qquad Q[\vec{u}] = \gamma_1E[\vec{u}] + 2\gamma_2\Re\left[\langle\ovl{\vec{u}_1},\vec{u}\rangle\langle\vec{u}_2,\vec{u}\rangle\right].
	\end{equation}
    We study the particular form \cref{eq:lyapunov_primitive} rigorously in later sections; at present, we outline how our general schematic might satisfy the requirements of \cref{def:lyapunov}. The first criterion of \cref{def:lyapunov} is immediate, since each term of \cref{eq:lyapunov_proposed} vanishes when $\vec{u}=0$. The second and third criteria---positivity of $V$ and negativity of $\dot V$---depend on the choice of $\vec W$ and~$Q$.
    
To see how $V$ can be made positive definite, recall Young's inequality:
    \begin{equation}\label{eq:young}
        |ab|\leq \frac{\eps}{2}a^2 + \frac{1}{2\eps}b^2,
    \end{equation}
    which holds for any $a,b\in\RR$ and any $\eps>0$. Applying the Cauchy--Schwarz inequality and then Young's inequality gives
	\begin{equation}|\langle\vec{W},\vec{u}\rangle| \leq \|\vec{W}\|\|\vec{u}\| 
=\frac{\|\vec{W}\|}{\sqrt{2}}\left(\sqrt 2\,\|\vec u\|\right)
\leq \frac{\|\vec{W}\|}{\sqrt{2}}\left(\eps + \eps^{-1}E[\vec{u}]\right)
	\end{equation}
	for any $\eps>0$, so
\begin{equation}
V[\vec u] \ge \left(1-\sqrt 2\,\|\vec W\|\eps^{-1}\right)E[\vec u]^2 + Q[\vec u] - \sqrt 2\,\|\vec W\|\eps E[\vec u].
\end{equation}
For any $\vec W$, we can ensure positivity of $V[\vec u]$ by choosing $\eps > \sqrt{2}\,\|\vec{W}\|$ and then choosing the quadratic form $Q$ so that $Q[\vec u]> \sqrt{2}\,\eps \|\vec{W}\| E[\vec u]$ for all $\vec u$.
	
Ensuring that $\dot V[\vec u]$ is negative definite is more difficult, and accounts for most of our analysis in later sections. To understand heuristically why it is possible, consider how terms of each degree in $\dot V$ arise from the terms of $V$:
	\begin{equation}\label{eq:arrows}
		\begin{minipage}{0.9\textwidth}
			\centering
			\begin{tikzpicture}
				\node [anchor=north] (note) at (0,1.3) {$V = \underbracket{E[\vec{u}]^2} + \underbracket{2E[\vec{u}]\langle\vec{W},\vec{u}\rangle}  + \underbracket{Q[\vec{u}]}$};
				\node [anchor=north] (note2) at (0,-0.75) {$\dot{V} = [\dot{V}]_4 + [\dot{V}]_3 + [\dot{V}]_2$};
				\draw [-latex, ultra thick, black] (-1.5,0.5) to(-0.95,-0.75);
				\node [black] at (-1.7,0.3) {\small\textbf{L}};
				\draw [-latex, ultra thick, blue] (0.3,0.5) to(-0.75,-0.75);
				\node [blue] at (-0.2,0.3) {\small\textbf{N}};
				\draw [-latex, ultra thick, blue] (2.2,0.5) to(1.4,-0.75);
				\node [blue] at (2.35,0.3) {\small\textbf{L}};
				\draw [-latex, ultra thick, dashed, red] (2.05,0.5) to(0.45,-0.75);
				\node [red] at (1.35,0.3) {\small\textbf{N}};
				\draw [-latex, ultra thick, dashed, red] (0.45,0.5) to(0.3,-0.75);
				\node [red] at (0.7,0.3) {\small\textbf{L}};
			\end{tikzpicture}%
		\end{minipage}
	\end{equation}
	%\begin{equation}\label{eq:arrows}
	%    \begin{minipage}{0.9\textwidth}
		%    \centering
		%    \begin{tikzpicture}
			%        \node [anchor=north] (note) at (0,1.3) {$V = \underbracket{E[\vec{u}]^2} + \underbracket{2E[\vec{u}]\langle\vec{W},\vec{u}\rangle}  + \underbracket{Q[\vec{u}]}$};
			%        \node [anchor=north] (note2) at (0,-0.5) {$\dot{V} = (2E\dot{E} + \text{quartic}) + (\text{cubic}) + (\text{quadratic})$};
			%        \draw [-latex, ultra thick, black] (-1.5,0.5) to(-2.0,-0.5);
			%        \node [black] at (-1.9,0.3) {\small\textbf{L}};
			%        \draw [-latex, ultra thick, blue] (0.3,0.5) to(-0.7,-0.5);
			%        \node [blue] at (-0.3,0.3) {\small\textbf{N}};
			%        \draw [-latex, ultra thick, blue] (2.2,0.5) to(2.5,-0.5);
			%        \node [blue] at (2.55,0.3) {\small\textbf{L}};
			%        \draw [-latex, ultra thick, red] (2.05,0.5) to(0.95,-0.5);
			%        \node [red] at (1.45,0.3) {\small\textbf{N}};
			%        \draw [-latex, ultra thick, red] (0.45,0.5) to(0.8,-0.5);
			%        \node [red] at (0.8,0.3) {\small\textbf{L}};
			%\end{tikzpicture}%
			%\end{minipage}
			%\end{equation}
    Here, $[\dot{V}]_4$, $[\dot{V}]_3$, and $[\dot{V}]_2$ respectively denote the quartic, cubic, and quadratic components of the functional $\dot{V}$. Arrows labeled \textbf{N} denote terms arising from nonlinear self-advection of $\vec{u}$, while arrows labeled \textbf{L} denote terms from linear evolution of $\vec u$ via diffusion and linear advection by $\vec{U}$. The quantity $\dot V$ is negative definite only if $[\dot V]_4$ and $[\dot V]_2$ are each negative definite, and only if they together control the sign-indefinite cubic term in the sense that
    \begin{equation}\label{eq:cubic_dominated}
    \left|[\dot{V}]_3\right| < \left|[\dot{V}]_2 + [\dot{V}]_4 \right|
    \end{equation}
    for all admissible $\vec u$. The terms indicated by solid arrows in~\cref{eq:arrows} are those which might be helpful in making $[\dot V]_4$ and $[\dot V]_2$ negative definite when $\nRe>\nRe_E$; the terms indicated by dashed arrows are unhelpful in this regard, and must be small enough that they are dominated as in~\cref{eq:cubic_dominated}.

    Identifying the schematic~\cref{eq:lyapunov_proposed} among more general possibilities has simplified our task substantially. Indeed, it is not very hard to find a quadratic form $Q$ making $[\dot V]_2$ negative definite and a vector field $\mathbf W$ making $[\dot V]_4$ negative definite, as explained shortly. The challenge remaining is to do this in such a way that \cref{eq:cubic_dominated} also holds and is tractable to verify.  For 2-D plane shear flows, we motivate particular choices of $Q$ and $\mathbf W$ in \cref{sec:symmetries}, and verify the corresponding cubic bound~\cref{eq:cubic_dominated} in \cref{sec:symmetries}. First, however, we discuss the quadratic and quartic terms in $\dot{V}$ at a more general level.

    We first consider the quadratic component $[\dot{V}]_2$, which quantifies the evolution of $Q[\vec u]$ under the linearized Navier--Stokes equations. The linearized Navier--Stokes equations must be stable at $\vec{u}=0$ if global stability of the nonlinear equations is to be possible. A consequence of linear stability is that, for any bounded, negative definite quadratic form $M[\vec u]$, there exists a positive definite quadratic form $Q[\vec{u}]$ such that $[\dot{V}]_2=M[\vec{u}]$ \citep{curtain_introduction_1995}. Thus, at any $\nRe$ where the base flow is linearly stable, there are many choices of $Q$ that make $[\dot V]_2$ negative definite.
    
    %is sufficiently small and can be estimated well enough to verify that the full $[\dot V]_3$ term is controlled as in~\cref{eq:cubic_dominated}.
    
    In our ansatz \cref{eq:lyapunov_primitive}, $Q$ is a low-rank perturbation of $E$. Since we focus on $\nRe$ values where $\dot{E}$ is negative definite except along the span of one (complex) velocity mode $\vec{u}_1$, we can construct $Q$ by modifying $E$ only on the subspace spanned by two modes: the energy-unstable mode $\vec{u}_1$, and an energy-stable mode $\vec{u}_2$ of the same wavenumber. By mixing the two modes in $Q$, we enable $\vec{u}_2$ to `absorb' transient growth of $E[\vec{u}_1]$, so that $\dot Q$ can be negative definite for all $\vec u$.
        
	We now consider the quartic component $[\dot V]_4$, which comprises the evolution of $E^2$ and a nonlinear contribution from the cubic term of $V$. The evolution of $E^2$ is simply
    \begin{equation}\label{eq:lieE2}
        \cL \big(E[\vec{u}]^2\big) = 2E[\vec{u}]\dot{E}[\vec{u}].
    \end{equation}
    Avoiding a quintic term in $\dot V$ relies on the quartic term of $V$ being $E^2$ (cf.\ \cref{rem:polynomials}). However, we know that $\dot{E}[\vec{u}]$ (and thus $2E[\vec{u}]\dot{E}[\vec{u}]$) is positive for certain velocity fields when $\nRe>\nRe_E$, meaning that $E[\vec u(\cdot,t)]^2$ can grow transiently.
    % We also note that, even when $\nRe>\nRe_E$, transient energy growth might be quite rare for a given system. For instance, among $10^4$ random initial conditions for 2-D plane Couette flow at $(\nRe,L)=(240,2)$, \citet{fuentes_global_2022} found that only $7$ initial conditions exhibited transient energy growth, and each of them only momentarily.
	The aim of the cubic term in \cref{eq:lyapunov_proposed} is to produce another quartic term in $\dot V$ that, when added to~\cref{eq:lieE2}, restores negativity. The cubic term of $V$ evolves as
    \begin{equation}
        \begin{aligned}
        \label{eq:lie V cubic}
        \cL\big(2E[\vec{u}]\langle \vec{W},\vec{u}\rangle\big) &= 2\dot{E}[\vec{u}]\langle \vec{W},\vec{u}\rangle + 2E[\vec{u}]\cL\big(\langle \vec{W},\vec{u}\rangle\big)\\
        &=2\dot{E}[\vec{u}]\langle \vec{W},\vec{u}\rangle -2E[\vec{u}]\langle\vec{W},(\vec{u}+\vec{U})\cdot\nabla\vec{u} + \vec{u}\cdot\nabla\vec{U} + \nabla p - \nu\nabla^2\vec{u}\rangle \\
        &= 2E[\vec{u}]\langle\vec{u}\cdot\nabla\vec{W}\cdot\vec{u}\rangle + (\text{cubic}),
        \end{aligned}
    \end{equation}
    where the quartic term has been integrated by parts using incompressibility and no-slip (or no-stress) boundary conditions. Combining  \cref{eq:lieE2,eq:lie V cubic} and recalling expression~\cref{eq:edot_primitive} for $\dot E$ gives
    \begin{equation}\label{eq:V4}
        \begin{aligned}
        [\dot{V}]_4 
        % &= 2E[\vec{u}]\dot{E}[\vec{u}] + 2E[\vec{u}]\langle\vec{u},\nabla\vec{W}\cdot\vec{u}\rangle \\
        &= 2E[\vec{u}]\left( \langle\vec{u}\cdot\nabla(\vec{W}-\vec{U})\cdot\vec{u}\rangle - \nu\|\nabla\vec{u}\|^2\right).
        \end{aligned}
    \end{equation}
	In order for the (negative definite) second term in the parentheses to dominate the sign-indefinite first term when $\nRe>\nRe_E$, one must choose $\vec W$ to partially cancel $\vec U$. When we apply the $V$ ansatz \cref{eq:lyapunov_primitive} to particular flows in \cref{sec:symmetries,sec:analysis}, it is convenient for our analysis to simply choose $\mathbf W$ proportional to the first energy eigenmode that has the same wall-normal symmetry as $\vec{U}$. %These are streamwise-constant modes with sinusoidal velocity profiles, whereas the $\mathbf U$ profiles of Couette and Poiseuille flows are linear and quadratic, respectively.

    One can also understand the negativity of $[\dot{V}]_4$ by considering the energetics of the system. Up to a constant, the expression
		\begin{equation}\label{eq:totalenergy}
			E_\mathrm{tot}[\vec{u}] = E[\vec{u}] +\langle\vec{U},\vec{u}\rangle
		\end{equation}
		represents the total kinetic energy of the fluid, including that of the reference state $\vec{U}$. Nonlinear advection can move energy between the two terms in \cref{eq:totalenergy} by tilting the flow toward or away from the laminar shear profile, but cannot modify the total. As such, the time derivative
		\begin{equation}\dot{E}_\mathrm{tot}[\vec{u}] = -\nu\|\nabla\vec{u}\|^2 - \nu\langle\nabla\vec{U},\nabla\vec{u}\rangle
		\end{equation}
        has a negative definite quadratic term.
		
		By choosing $\vec{W}$ such that $\|\vec{W}-\vec{U}\|$ is small, we partially offset the increase of fluctuation energy by explicitly subtracting off (an approximation of) the `interaction energy' between $\vec{U}$ and $\vec{u}$, from which the fluctuation energy is converted via nonlinear advection. This procedure introduces sign-indefinite cubic terms corresponding to viscous generation of vorticity, but retains a negative definite quartic term corresponding to viscous dissipation of the total energy $E_\mathrm{tot}[\vec{u}]$. This physical interpretation suggests that analogous Lyapunov functions might be constructed for systems with non-uniform internal energy---e.g., those governed by the Boussinesq equations \citep{pedlosky_geophysical_2013}---by augmenting the cubic term in \cref{eq:lyapunov_proposed} to account for other energy reservoirs.

		\subsection{A toy model of quartic Lyapunov functions}
        
		A quartic Lyapunov function following our schematic \cref{eq:lyapunov_proposed} can be illustrated with a toy model. Consider the following ordinary differential equation in the three variables $\vec{z}=(z_1,z_2,z_3)$:
		\begin{equation}\label{eq:toymodel}
			\begin{aligned}
				\dot{z}_1 &= -R^{-1}z_1 + z_2 - z_3(z_1+z_2)\\
				\dot{z}_2 &= -R^{-1}z_2 \;\textcolor{white}{+ z_1}\, - z_3(z_1+z_2)\\
				\dot{z}_3 &= -R^{-1}z_3 + z_2 \hspace{0.05em}+ (z_1+z_2)^2,
			\end{aligned}
		\end{equation}
		where $R>0$. This model is meant resemble a three-mode truncation of~\cref{eq:NS_primitive} for shear flows, where $z_3$ is the amplitude of a streamwise-constant mode, $z_1$ and $z_2$ are amplitudes of modes with the same nonzero streamwise wavenumber,  and $R$ is a Reynolds number. \Cref{fig:toymodel} is a bifurcation diagram showing the zero equilibrium as well as a pair of nonzero equilibria that arise in a saddle-node bifurcation at $R_G\approx5.46$. To confirm that this value of $R_G$ is indeed the global stability threshold, we seek Lyapunov functions verifying global stability at all smaller $R$.
        
		\begin{figure}
			\centering
			\includegraphics[scale=0.35,trim={1.5cm 2.2cm 1.5cm 2.2cm},clip]{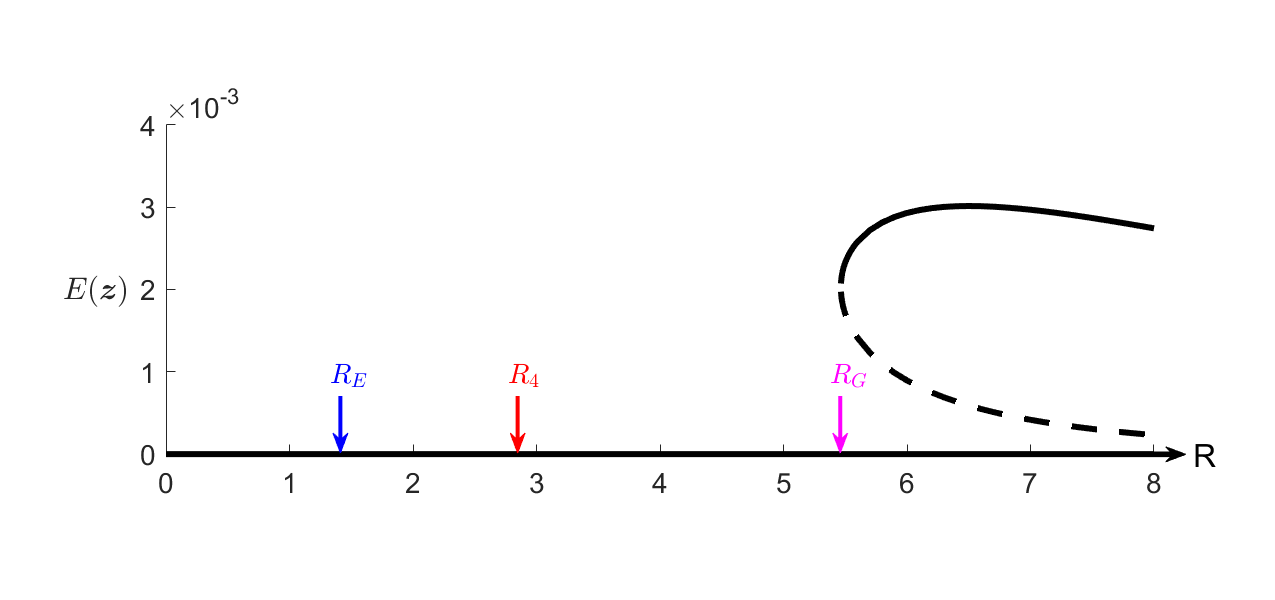}
			\caption{Bifurcation diagram showing stable (solid) and unstable (dashed) equilibria of the toy model \cref{eq:toymodel}. Three values of the parameter $R$ are highlighted: $R_E\approx1.41$ is the largest value at which the (analogue) energy method proves global stability, $R_4\approx 2.85$ is the largest value at which the quartic polynomial \cref{eq:toymodelV} is a Lyapunov function proving global stability, and $R_G\approx5.46$ is the value at which a nonzero equilibrium appears. Global stability for $R$ up to $R_G$ is confirmed by higher-degree polynomial Lyapunov functions (see text). }\label{fig:toymodel}
		\end{figure}
		
We define the energy as $E = \tfrac{1}{2}|\vec{z}|^2$.
As with the Navier--Stokes equations, the time derivative of $E$ is quadratic in $\vec z$  because $E$ is conserved by the nonlinear terms in the model:
		%This quantity is conserved by the `nonlinear convective' terms of \cref{eq:toymodel}, so its time derivative is quadratic:
		\begin{equation}\dot{E} = -2R^{-1}E + z_1z_2 + z_2z_3.
		\end{equation}
Proceeding as in the energy method gives
\begin{equation}
\label{eq:toy Edot}
\dot{E} \leq 2\lambda_1E, \qquad
            \lambda_1=\max_{\vec z\in\mathbb R^3}\frac{\dot E(\vec z)}{2E(\vec z)}=\frac{1}{\sqrt{2}}-\frac{1}{R},
\end{equation}
so energy must decrease monotonically at all $\vec z\neq 0$ when $R<R_E=\sqrt{2}$, implying global stability in this regime. When $R>R_E$, there are some points in $\mathbb R^3$ at which $E(\mathbf{z}(t))$ can increase in time, at least instantaneously. The value of $\lambda_1$ in~\cref{eq:toy Edot} follows from the stationarity condition $\nabla(\dot E/E)=0$.
%with equality when $\vec{x}\propto (1,\sqrt{2},1)^t$. 

To prove global stability at $R$ values larger than $R_E$, our schematic~\cref{eq:lyapunov_proposed} suggests a quartic polynomial of the form
		\begin{equation}\label{eq:toymodelV}
			V = E^2 +\gamma z_3E + \vec{x}\cdot Q\cdot\vec{x}
		\end{equation}
for some real number $\gamma$ and symmetric matrix $Q$. The time derivative $\dot{V}$ can be calculated using the governing model \cref{eq:toymodel}. Both $V$ and $\dot{V}$ are quartic polynomials in $\vec z$ that depend only linearly on $\gamma$ and $Q$. Consequently, for any fixed $R$ value, we can use sum-of-squares (SOS) computational methods to seek $\gamma$ and $Q$ values for which $V(\vec z)$ and $\dot V(\vec z)$ are guaranteed to be positive and negative definite, respectively \citep{parrilo_semidefinite_2003}. This succeeds when $R<R_4\approx 2.85$, with the parameter values
\begin{equation}
\gamma=-1.09, \qquad 
Q =   \begin{pmatrix}
            \phantom{-}0.34& - 0.19 &-0.30\phantom{i}\\
            -0.19 & \phantom{-}1.16 & \phantom{-}0.32\phantom{i}\\-0.30 & \phantom{-}0.32 & \phantom{-}0.73\phantom{i} 
            \end{pmatrix}.
\end{equation}
Our quartic ansatz~\cref{eq:toymodelV} thus verifies global stability up to $R$ roughly twice as large as the analogue energy method. 

In the case of this toy model, one can use SOS computations to construct a sixth-order Lyapunov function $V(\vec{z})$ that verifies global stability up to $R_G=5.46$. In the case of the Navier--Stokes equations, we restrict our analysis to quartic~$V$.

%we can go further to verify stability at $R$ nearly up to $R_G$. Using SOS computations, construct higher-degree polynomial $V(\vec z)$ whose leading terms are higher powers of $E$. For instance, $V$ of degree [what?] verify global stability up to $R=[what?]$. In the case of the Navier--Stokes equations, we restrict our analysis to quartic~$V$.

\section{Complex streamfunctions for two-dimensional shear flows}
\label{sec:preliminaries}
In the remaining sections, we restrict attention to 2-D, wall-bounded shear flows on the dimensionless domain
    \begin{equation}\label{eq:domain}
        \Omega = [-L/2,L/2]\times[-1/2,1/2].
    \end{equation}
		The domain is $L$-periodic in the streamwise coordinate $x$, and velocity fluctuations satisfy the no-slip condition $\vec{u}=0$ on the boundaries at $y=\pm1/2$. \Cref{fig:couette} shows two canonical shear flows: 2-D plane Couette flow, for which the base flow $\vec{U}=y\hat{x}$ is driven by counter-moving walls, and 2-D plane Poiseuille flow, for which $\vec{U} = (4y^2-1)\hat{x}$ is driven by a streamwise pressure gradient. These dimensionless variables fix our convention for the Reynolds number $\nRe$. Specifically, $\nRe=hU/\tilde{\nu}$, where $h$ is the full layer thickness, $U$ is the maximum velocity difference in the laminar flow, and $\tilde{\nu}$ is the dimensional kinematic viscosity. We note that a common alternative convention for Poiseuille flow is to fix $\nRe = hU/2\tilde{\nu}$, corresponding to a length scale of $h/2$.
		
		\begin{figure}
			\centering
			\begin{subfigure}{.46\textwidth}
				\centering
                \hspace*{0.3cm}
				\includegraphics[scale=1.13]{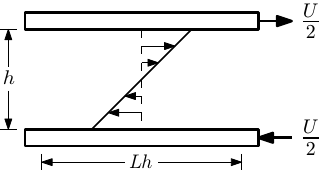}
				\caption{2-D plane Couette flow}
			\end{subfigure}%
			\begin{subfigure}{0.081\textwidth}
				\caption*{$ $}
			\end{subfigure}%
			\begin{subfigure}{.46\textwidth}
				\centering
                \hspace*{0.3cm}
				\includegraphics[scale=1.13]{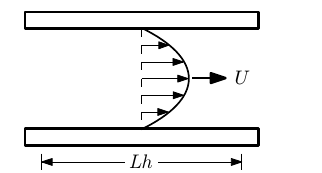}
				\caption{2-D Poiseuille flow}
			\end{subfigure}
			\caption{Schematics of planar Couette and Poiseuille flows in dimensional variables, modified from \citet{fuentes_global_2022} with permission. \textbf{(a)} In Couette flow, the walls move in opposite directions with relative speed $U$ and thus induce the laminar shear profile $\vec{U}=(Uy/h)\,\hat{x}$. \textbf{(b)} In Poiseuille flow, the walls are stationary, but a streamwise pressure gradient induces the laminar flow $\vec{U}=U(4y^2/h^2-1)\,\hat{x}$. In both cases, we scale coordinates such that $h=U=1$.}\label{fig:couette}
		\end{figure}

		We use a streamfunction--vorticity formulation for our analysis of 2-D flows. The scalar streamfunctions $\psi$ and $\Psi$---for the fluctuating and laminar fields, respectively---are related to the velocity fields by
		\begin{equation}
			\vec{u} = \nabla^\perp\psi = (-\psi_y,\psi_x),\qquad \vec{U} = \nabla^\perp\Psi = (-\Psi_y,\Psi_x).
		\end{equation}
		The corresponding scalar vorticity fields are
		\begin{equation}
		\zeta=\nabla^\perp\cdot\vec{u} = \nabla^2\psi,\qquad 
        Z=\nabla^\perp\cdot\vec{U}=\nabla^2\Psi.
		\end{equation}
		In all, the laminar fields for Couette and Poiseuille flows are as follows:
		\begin{equation}
			\begin{aligned}
				\text{\bfseries Couette:} \qquad\qquad & \vec{U}^C = y\hat{x}, &\Psi^C &= -\tfrac{1}{2}y^2,&Z^C &= -1,\vspace{0.05in}\\
				\text{\bfseries Poiseuille:} \qquad\qquad & \vec{U}^P = (1-4y^2)\hat{x},\quad& \Psi^P &= \tfrac{4}{3}y^3 -y,\quad&Z^P&=8y,
			\end{aligned}
		\end{equation}
        where $C$ or $P$ superscripts denote expressions specific to Couette or Poiseuille flow.
		
		To write the Navier--Stokes equations in streamfunction--vorticity formulation, we use the 2-D Jacobian denoted by
		\begin{equation}\label{eq:jacobian}
			\begin{gathered}
				\{a,b\} = a_xb_y - a_yb_x,\\
				\{a,\vec{u}\} = -\{\vec{u},a\} = a_x\vec{u}_y - a_y\vec{u}_x,\\
				\{\vec{u},\vec{v}\} = \vec{u}_x\cdot\vec{v}_y - \vec{u}_y\cdot\vec{v}_x
			\end{gathered}
		\end{equation}
		for any scalar fields $a,b$ and vector fields $\vec{u},\vec{v}$ on $\Omega$. Then, the streamfunction $\psi$ and vorticity $\zeta$ evolve according to
		\begin{equation}\label{eq:NS}
			\zeta_t + \{\Psi + \psi,\zeta\} + \{\psi,Z\}= \nu\nabla^2\zeta,\qquad \zeta = \nabla^2\psi,
		\end{equation}
		with no-slip conditions on $\psi$:
		\begin{equation}\label{eq:BC}\nabla\psi = 0\quad \text{at}\quad y=\pm 1/2.\end{equation}
        Notice that the boundary conditions~\cref{eq:BC} are stronger than enforcing only Neumann conditions on $\psi$ but weaker than enforcing both Neumann and Dirichlet conditions. We next discuss the space of admissible streamfunctions and its subspaces.
		
		\subsection{Function spaces for streamfunctions}
        
		The set of admissible streamfunctions can be written in the following way:
		\begin{equation}\label{eq:cH}
			\cH_\RR = \big\{\phi:\Omega\to\RR\;\big|\;\phi,\nabla\phi\in L^2,\;\nabla\phi|_{\partial\Omega}= 0\big\}.
		\end{equation} 
		We discuss the functional-analytic details of $\cH_\RR$ in \cref{app:space}. 
        
        \begin{remark}
The streamfunction in $\cH_\RR$ only needs one well-defined spatial derivative, even though the 2-D Navier--Stokes equations \cref{eq:NS} appear to call for four in the term $\nabla^2\zeta = \nabla^4\psi$. For any $\psi(\vec{x})\in\cH_\RR$, there is a unique solution $\tilde{\psi}(\vec{x},t)$ to \cref{eq:NS} with initial condition $\tilde{\psi}(\vec{x},0)=\psi(\vec{x})$, and $\tilde{\psi}$ is smooth and well-defined for all $t>0$ \citep{ladyzhenskaia1963}. Physically speaking, any nonzero viscosity is sufficient to immediately smooth an initial condition in $\cH_\RR$.
        \end{remark}

        It is often helpful to decompose the streamfunction into its streamwise Fourier modes:
		\begin{equation}\psi(x,y) = \sum_{k=-\infty}^\infty \hat{\psi}_k(y)e^{2\pi ikx/L}.
		\end{equation}
		Even though the total streamfunction $\psi$ is real-valued, each of its Fourier modes is generally complex-valued. To accommodate this, we complexify $\cH_\RR$ as
		\begin{equation}\label{eq:cH_C}
			\cH_\CC = \{\phi_1 + i\phi_2\;|\;\phi_1,\phi_2\in\cH_\RR\}.
		\end{equation} 
		We also make use of the spaces spanned by each Fourier mode separately,
		\begin{equation}\label{eq:cH_k}
			\begin{gathered}
				%\cH_\CC = \cdots \cH_{-2}+\cH_{-1}+\cH_0+\cH_1+\cH_2\cdots,\\
                \cH_k = \big\{\phi\in\cH_\CC\;\big|\;\phi(x,y) = \hat{\phi}(y)e^{2\pi i kx/L}\big\},
			\end{gathered}
		\end{equation}
        as well as those spanned by modes with only positive or negative values of $k$:
		\begin{equation}
			\cH_+ = \left\{\sum_{k=1}^\infty \phi_k\in\cH_{\mathbb{C}}\;\bigg|\;\phi_k\in\cH_k\right\},\qquad \cH_- = \left\{\sum_{k=1}^\infty \phi_{-k}\in\cH_{\mathbb{C}}\;\bigg|\;\phi_{-k}\in\cH_{-k}\right\}.
		\end{equation}
		In terms of set summation notation, we can decompose $\cH_\CC = \cH_0+\cH_++\cH_-$. 
        
        For complex-valued spaces like $\cH_{\mathbb C}$, it is typical to employ `Hermitian' inner products, in which one argument is conjugated. In the present work, we retain the bilinear form $\langle\,\cdot\,,\cdot\,\rangle$ defined by \cref{eq:innerprod} without complex conjugates. In this notation, the $L^2$ norm of a vector field in $\cH_{\mathbb C}$ is
		\begin{equation}
        \|\vec{u}\|^2 = \langle\ovl{\vec{u}},\vec{u}\rangle = \int |\vec{u}|^2,
        \end{equation}
		and likewise for complex-valued scalar or tensor fields.

		\subsection{Energy eigenmodes in the streamfunction formulation}
		
		We now formulate the energy method and its associated eigenproblem using the streamfunction. The energy of $\vec u=\nabla^\perp\psi$ is
		\begin{equation}
        \label{eq: E psi}
        E[\psi] = \tfrac{1}{2}\|\nabla^\perp\psi\|^2 = \tfrac{1}{2}\|\nabla\psi\|^2,
        \end{equation}
		and its time evolution is
		\begin{equation}\label{eq:edot}
			\dot{E}[\psi] = \langle\nabla\psi,\nabla\psi_t\rangle = -\langle\psi,\zeta_t\rangle = \langle\psi,\{\Psi,\zeta\}\rangle - \nu\|\zeta\|^2.
		\end{equation}
    This expression is quadratic in $\psi$ (since $\zeta=\nabla^2\psi$), just as the expression \cref{eq:edot_primitive} is quadratic in $\vec{u}$. One can write $\dot{E}[\psi]$ in terms of the following self-adjoint operator on $\cH_\RR$:
		\begin{equation}
        \label{eq:energystability}
			\dot{E}[\psi] = \langle\psi, \cE\psi\rangle,\qquad \cE\psi = \tfrac{1}{2}\nabla\cdot(Q_\Psi\cdot\nabla\psi) -\nu \nabla^4\psi,
		\end{equation}
		where we define
		\begin{equation}\label{eq:bigQ}Q_f = \begin{pmatrix} 2f_{xy}& f_{yy}-f_{xx}\\f_{yy}-f_{xx} & -2f_{xy}\end{pmatrix}
		\end{equation}
		for any scalar field $f$. The operator $\cE$ is the `energy stability operator' for the streamfunction formulation. In 2-D Couette and Poiseuille flows, it simplifies as follows:
        \begin{equation}
			\begin{aligned}
				\text{\bfseries Couette:} \qquad\qquad & 
                \cE\psi = -\psi_{xy}-\nu\nabla^4\psi\\
				\text{\bfseries Poiseuille:} \qquad\qquad & 
                \cE\psi = 8y\psi_{xy} + 4\psi_x - \nu\nabla^4\psi.
			\end{aligned}
		\end{equation}
		% \begin{equation}
		% 	\cE_C:\psi\mapsto -\psi_{xy}-\nu\nabla^4\psi,\qquad \cE_P:\psi\mapsto 8y\psi_{xy} + 4\psi_x - \nu\nabla^4\psi,
		% \end{equation}
		% respectively.
        
        In terms of the streamfunction, the Euler--Lagrange equation associated with the maximization that defines $\lambda_1$ in~\cref{eq:edot_to_e} is the following (generalized) eigenproblem:
		\begin{equation}\label{eq:eigenmodes}
		\cE\varphi_j = -\lambda_j\nabla^2\varphi_j.
		\end{equation}
        Note that integrating~\cref{eq:eigenmodes} against $\overline{\varphi}_j$ gives $\dot E[\varphi_j]=2\lambda_j E[\varphi_j]$, in accordance with \cref{eq:edot_to_e}. Since $\cE$ and $\nabla^2$ are both self-adjoint on $\cH_\RR$ with respect to the inner product $\langle\,\cdot\,,\cdot\,\rangle$, the eigenproblem has infinite eigenpairs $(\lambda_j,\varphi_j)$, and all $\lambda_j$ are real. The energy eigenmodes $\varphi_j$ are a complete basis for $\cH_{\mathbb C}$, and they can be chosen to be orthogonal in the sense that
		\begin{equation}
			\langle\nabla\ovl{\varphi_{i}},\nabla\varphi_j\rangle = \delta_{ij},
		\end{equation}
		with $\delta_{ij}$ denoting the Kronecker delta. It will be useful to expand the real streamfunction $\psi\in\cH_{\mathbb R}$ in the complex energy eigenbasis $\varphi_j\in\cH_{\mathbb C}$ as
		\begin{equation}\label{eq:expansion1}
			\psi(x,y) = \sum_j c_j\ovl{\varphi_j}(x,y)/\|\nabla\varphi_j\|^2,\qquad c_j = \langle\nabla\varphi_j,\nabla\psi\rangle = -\langle\varphi_j,\nabla^2\psi\rangle.
		\end{equation}

		\section{Restricting the search for Lyapunov functions}
        \label{sec:symmetries}
        
		In terms of the streamfunction $\psi$ and its energy eigenmodes, we seek a finite set of modes $\varphi_{j_1},...,\varphi_{j_N}$ and a Lyapunov function of the form
		\begin{equation}
        \label{eq:V_revised}
			V[\psi] = E[\psi]^2 + P(c_{j_1},...,c_{j_N},q^2).
        \end{equation}
		Here, $P$ is a real-valued polynomial with cubic and quadratic terms in $N+1$ complex variables, and $q^2$ is the energy of the `tail' of $\psi$, which is the part of $\psi$ orthogonal to $\op{span}\{\varphi_{j_1},...,\varphi_{j_N}\}$.
		
		A primary difficulty with \cref{eq:V_revised} is its potentially high dimension. As $N$ grows large, the space of cubic polynomials has dimension $O(N^3)$, and the space of quartic polynomials (arising in the analysis of $\dot V$) has dimension $O(N^4)$. In the numerical implementation of \citet{fuentes_global_2022} and \citet{iligaray_improved_2026}, the sum-of-squares computations used to verify positivity of $V$ and negativity of $\dot V$ were thus very costly. Their computations reached at most $N=13$ modes, in which case verifying stability at a single $(L,\nRe)$ pair required several hours of CPU time and about 40~GB of memory.
		
		We here reduce the problem size in two steps. In \cref{sec: sym modes,sec: sym V}, we show how the symmetries of Couette or Poiseuille flow can be imposed on $V$ to greatly restrict the ansatz~\cref{eq:V_revised}. In \cref{sec: V simple}, taking inspiration from the physical arguments of \cref{sec:schematic}, we restrict $V$ further to arrive at the simplest possible class of quartic Lyapunov functions for planar shear flows. This ansatz aids our subsequent analysis and---as discussed in \cref{sec:conclusion}---it suggests ways to greatly improve efficiency in future numerical investigations. 
		
		\subsection{Symmetry constraints for the energy basis}
        \label{sec: sym modes}
        
        There remains freedom in the selection of an energy eigenbasis from the eigenproblem~\cref{eq:eigenmodes}. Specifically, every eigenfunction that is not streamwise-constant belongs to a two-dimensional eigenspace, from which one can choose any two orthogonal functions. \citet{fuentes_global_2022} and \citet{iligaray_improved_2026} chose all modes to be real-valued, so that the coefficients $c_j$ appearing in the expansion \cref{eq:expansion1} would themselves be real-valued. Here, we instead choose complex eigenmodes, so that they can simultaneously be eigenmodes of the models' symmetry transformations. This choice greatly simplifies our description of symmetry-invariant Lyapunov functions in the next subsection.
		
		%\begin{lemma}\label{lem:symmetry}
		%    Suppose our fluid flow is invariant under the action of a compact symmetry group $G$, and suppose a Lyapunov function $V$ exists in the sense of \cref{def:lyapunov}. Then a Lyapunov function $V_0$ exists that is itself invariant under $G$, i.e., it satisfies $V_0\circ g = V_0$ for any $g\in G$.
		%\end{lemma}
		%The lemma follows from considering the properties of the averaged functional
		%\[V_0[\psi] = \int_G V[g\cdot\psi]\,d\mu(g),\]
		%where $\mu$ is the unique normalized Haar measure on $G$ \citep{folland_course_1994}. It is proved in a finite-dimensional setting, for instance, in the thesis of \citet{oeri_convex_2023}, but the infinite-dimensional proof is straightforward. 
		
		All plane parallel shear flows are invariant under streamwise translations by any $x_0\in\RR$,
		\begin{equation}
        \label{eq:translation}
			T_{x_0}:\varphi(x,y)\mapsto \varphi(x-x_0,y).
		\end{equation}
        On top of this, Couette and Poiseuille flows each have a discrete symmetry. In Couette flow, the base velocity field $\mathbf U$ is odd in the wall-normal coordinate $y$, so the streamfunction dynamics~\cref{eq:NS} are invariant under a $180^\circ$ rotation in the $x$--$y$ plane. For convenience, we define this rotation operator with a complex conjugate as well,
		\begin{equation}
			\cR^C:\varphi(x,y)\mapsto \ovl{\varphi(-x,-y)},
		\end{equation}
		so that $\cR^C$ maps each single-wavenumber space $\cH_k$ to itself. In Poiseuille flow, $\mathbf U$ is even in $y$, so the streamfunction dynamics are invariant under a vertical reflection,
		\begin{equation}
			\cR^P:\varphi(x,y)\mapsto -\varphi(x,-y).
		\end{equation}
		
        Translation-invariant shear flows have energy eigenmodes that are sinusoidal (or constant) in the streamwise direction. They can be chosen to take the form of complex plane waves,
		\begin{equation}
        \label{eq:eigenmodes_waves}
			\varphi_j(x,y) = \hat{\varphi}_j(y) e^{2\pi i k_j x/L} \in \cH_{k_j},
		\end{equation}
        where $j$ indexes a chosen ordering of energy eigenmodes, and $k_j\in\ZZ$ are the corresponding streamwise wavenumbers. Energy eigenmodes of the form~\cref{eq:eigenmodes_waves} are also eigenmodes of $T_{x_0}$, since 
		\begin{equation}T_{x_0}\varphi_j = e^{-2\pi i k_j x_0/L}\varphi_j.\end{equation}
        We note that it is common to label 2-D eigenmodes by a double-indexing scheme $(k,m)$, as in \citet{fuentes_global_2022}, where $k$ is the streamwise wavenumber and $m$ is a wall-normal mode number. A single-indexing scheme is more convenient here, since our analysis in the next section uses only three complex modes.

        After choosing an ordering for the $j$ indexing, it remains only to normalize the modes and choose their complex phases. We normalize as follows:
		\begin{equation}\label{eq:eigenmodes_norm}
			\begin{aligned}
				\|\nabla\varphi_j\|^2 &= 1/2\qquad &\text{if}\;\varphi_j\in\cH_0,\\
				\|\nabla\varphi_j\|^2 &= 1\qquad&\text{if}\; \varphi_j\notin\cH_0.
			\end{aligned}
		\end{equation}
		To choose the phases, we observe from linear-algebraic considerations that the eigenmodes $\varphi_j$ of $\cE$ will also be eigenmodes of $\cR^{C}$ in Couette flow and of $\cR^P$ in Poiseuille flow. Since both maps square to the identity, their only possible eigenvalues are $\pm 1$, and we find
		\begin{equation}\label{eq:eigenvalues_R}
			\begin{gathered}
				\hat{\varphi}^C_j(y)e^{2\pi ik_jx/L} = \pm\cR_C \left[\hat{\varphi}^C_j(y)e^{2\pi ik_jx/L}\right] = \pm \ovl{\hat{\varphi}^C_j(-y)}e^{2\pi ik_jx/L},\\
				\hat{\varphi}^P_j(y)e^{2\pi ik_jx/L} = \pm\cR_P \left[\hat{\varphi}^P_j(y)e^{2\pi ik_jx/L}\right] = \mp \hat{\varphi}^P_j(-y)e^{2\pi ik_jx/L}. 
			\end{gathered}
		\end{equation}
        We fix the phases of all Couette modes $\varphi_j^C$ by requiring them to satisfy the following `Hermitian' property:
		\begin{equation}
        \label{eq:eigenmodes_herm}
			\hat{\varphi}^C_j(y) = \ovl{\hat{\varphi}^C_j(-y)}.
        \end{equation}
        Note that not all streamwise-constant modes are real-valued with this convention; streamwise-constant modes that are even or odd in $y$ are purely real or imaginary, respectively. We fix the phase of Poiseuille modes $\varphi_j^P$ that are streamwise-constant by requiring them to be real-valued,
        \begin{equation}
        \label{eq:eigenmodes_real}
			\varphi_j^P\in\cH_\RR\qquad \text{if}\;\varphi_j^P\in\cH_0.
		\end{equation}
        We do not choose phases at this stage for the $\varphi_j^P$ modes with nonzero $k_j$, but we revisit the matter in \cref{sec:analysis}.
		
		\subsection{Symmetry constraints for the Lyapunov function}
        \label{sec: sym V}
        
		A classic result in dynamical systems is that, if a Lyapunov function exists for a given system, then one exists that is invariant under all (compact) symmetries of the system. To see why, consider the group of streamwise translations \cref{eq:translation}, which form a compact group because the $x$ direction is periodic. Suppose $\tilde{V}[\psi]$ is a Lyapunov function satisfying the three conditions of \cref{def:lyapunov} for a given 2-D shear flow at a fixed Reynolds number. Using invariance of the system under $T_{x_0}$, one can show that $\tilde{V}[T_{x_0}\psi]$ satisfies the same three conditions for any $x_0$, and so does its average over the symmetry group,
		\begin{equation}
			V[\psi] = \frac{1}{L}\int_{-L/2}^{L/2} \tilde{V}[T_{x_0}\psi]\,\d x_0.
		\end{equation}
This latter Lyapunov function is invariant under streamwise translations. Therefore, in the search for Lyapunov functions, one can restrict to translation-invariant functionals at the outset. More general versions of this argument can be found in the literature \citep[e.g.,][]{oeri_convex_2023}.
		
Symmetry lets us greatly restrict the $V$ ansatz \cref{eq:V_revised} for 2-D shear flows. Namely, for any fixed set of energy modes $\varphi_{j_1},...,\varphi_{j_N}$, one only needs to consider the relatively small subspace of polynomials $P(c_{j_1},...,c_{j_N},q^2)$ that are invariant under translation and discrete symmetries (if there are any). We enumerate these symmetry-invariant polynomials in the following lemma:
%\Cref{lem:symmetry_specific} gives the most general form of such quartic $V$ as a function of $(c_{j_1},...,c_{j_N},E)$, where we note that dependence on $E$ or on $q^2$ are interchangeable since $E -q^2$ can be written as a quadratic function of the $c_j$. The same result could be stated using only real variables, but it is far simpler and easier to implement in complex variables.

		%Below, we say that a functional $V[\psi]$ is a `polynomial' in $\psi$ if it can be decomposed as
		%\begin{equation}
			%V[\psi] = V_0 + V_1[\psi] + V_2[\psi,\psi] + \cdots V_M[\psi,...,\psi]
		%\end{equation}
		%for some $M\geq 0$, where $V_m$ is $m$-multilinear in each of its arguments; we say that $V$ is `quartic', `cubic', or `quadratic' if it is a polynomial with $M=4$, $3$, or $2$, respectively. We recover the following lemma:
		\begin{lemma}\label{lem:symmetry_specific}
			Fix a 2-D shear flow with streamwise period $L$, Reynolds number $\nRe$, and streamwise-constant laminar state $\Psi(y)$. Fix a finite set of energy modes $\varphi_j\in\cH_\CC$ with corresponding streamwise wavenumbers $k_j\in\ZZ$, as in \cref{eq:eigenmodes_waves}. Assume the mode set to be closed under complex conjugation, denoting $\varphi_{-j} = \ovl{\varphi_j}$. Suppose there exists a quartic Lyapunov function $\tilde{V}[\psi]$ of the form \cref{eq:V_revised} that is non-degenerate in the sense that $\tilde{V}[\psi]\geq \eps\|\nabla\psi\|^2$ for some $\eps>0$. Then there exists a translation-invariant quartic Lyapunov function $V[\psi]$ of the form
			\begin{equation}\label{eq:lemma_ansatz}
				V[\psi] = E[\psi]^2 + \sum_{\substack{i\\k_i=0}}b_ic_iE[\psi] + \sum_{\substack{i,j,\ell\\k_i+k_j+k_\ell=0}}B_{ij\ell}c_{i}c_{j}c_{\ell} + \sum_{\substack{i,j,k\\ k_i,k_j=k} }A_{k;ij}\ovl{c_{i}}c_{j} + aE[\psi],
			\end{equation}
where $c_j = \langle\nabla\varphi_j,\nabla\psi\rangle$, for some values of $a>0$ and $b_i,A_{k;ij},B_{ij\ell}\in\CC$. For Couette or Poiseuille flow, there exists a symmetry-invariant $V$ where~\cref{eq:lemma_ansatz} is further constrained as follows:\vspace{0.05in}
            \begin{enumerate}
                \item For Couette flow, suppose phases of $\varphi_j$ are chosen such that \cref{eq:eigenmodes_herm} holds. Then, all coefficients are real-valued, and, for indices corresponding to modes that are streamwise-constant (i.e., wavenumber-zero), $b_i\neq 0$ only if $\varphi_i\in\cH_0$ is even in $y$, $A_{0;ij}\neq 0$ only if $\varphi_i\varphi_j\in\cH_0$ is even in $y$, and $B_{ij\ell}\neq 0$ only if $\varphi_{i}\varphi_{j}\varphi_{\ell}\in\cH_0$ is even in $y$.\vspace{0.05in}
                \item For 2-D Poiseuille flow, suppose phases of streamwise-constant $\varphi_j$ are chosen such that \cref{eq:eigenmodes_real} holds. Then, the $b_i$ are all real-valued, and $b_i\neq 0$ only if $\varphi_i\in\cH_0$ is odd in $y$, $A_{k;ij}\neq 0$ only if $\varphi_i\varphi_j$ is odd in $y$, and $B_{ij\ell}\neq 0$ only if $\varphi_{i}\varphi_j\varphi_\ell$ is odd in $y$.
            \end{enumerate}
             
		\end{lemma}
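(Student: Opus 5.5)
The plan is to average $\tilde V$ over the symmetry group and then read off which monomials survive. First, fix the ansatz \cref{eq:V_revised} for $\tilde V$ and rewrite it in the variables $(c_j, E)$ instead of $(c_j,q^2)$. This is legitimate because $q^2 = E - \sum_j |c_j|^2/(2\|\nabla\varphi_j\|^2)$ is a quadratic polynomial in the $c_j$ by the normalization \cref{eq:eigenmodes_norm} and closure under conjugation. After this substitution, $\tilde V - E^2$ is a real polynomial in $(c_j, E)$ of degree at most three when $E$ is counted with degree two. Its general monomials are therefore $c_iE$, $c_ic_jc_\ell$, $c_ic_j$ (equivalently $\ovl{c_i}c_j$, using $\ovl{c_i}=c_{-i}$ for real $\psi$), $E$, and $c_i$; the last of these is removed below.

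Next, average over streamwise translations exactly as in the paragraph preceding the lemma. Using \cref{eq:eigenmodes_waves}, one has $c_j[T_{x_0}\psi] = e^{2\pi i k_j x_0/L}c_j[\psi]$, so that $E$ is invariant. Averaging the monomial $c_{i_1}\cdots c_{i_m}E^p$ over $x_0\in[-L/2,L/2]$ therefore multiplies it by $\delta_{k_{i_1}+\cdots+k_{i_m},0}$. Terms that survive are:
\begin{enumerate}
\item linear terms $c_iE$ with $k_i=0$;
\item cubic terms $c_ic_jc_\ell$ with $k_i+k_j+k_\ell=0$;
\item quadratic terms $c_{-i}c_j$, i.e.\ $\ovl{c_i}c_j$, with $k_i=k_j$.
\end{enumerate}
Linear terms $c_i$ alone must vanish, since otherwise $V$ fails to be positive near $0$ along $\pm$ directions; alternatively, observe that $V\ge0$ forces the degree-one part to vanish. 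The averaged functional is still a Lyapunov function: $V[0]=0$ is immediate, and positivity and negativity of $\dot V$ are preserved. For the latter, note that $\dot V[\psi]$ is the average of $\dot{\tilde V}[T_{x_0}\psi]$ by equivariance of \cref{eq:NS}, and an average of strictly negative values is negative.

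The term $aE$ with $a>0$ is obtained by absorbing part of the quadratic form into $E$. This is where the nondegeneracy $\tilde V\ge\eps\|\nabla\psi\|^2$ is used: the quadratic part restricted to the tail must be a multiple of $q^2$, hence of $E$ modulo the $\ovl{c_i}c_j$ terms. Applying this bound to $\psi$ in the tail, scaled small, shows that this multiple is positive.

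For the discrete symmetries, average once more over the two-element group $\{I,\cR\}$. For Couette flow, the Hermitian convention \cref{eq:eigenmodes_herm} gives $\cR^C\varphi_j=\varphi_j$. A change of variables then shows $c_j[\cR^C\psi]=\ovl{c_j[\psi]}$ for real $\psi$, because $\cR^C$ preserves the real inner product up to conjugation. Averaging each term with its image therefore replaces each coefficient by its real part, combined with its conjugate-index partner. The evenness conditions on streamwise-constant modes arise because on $\cH_0$ the map $\cR^C$ acts as $\varphi(y)\mapsto\ovl{\varphi(-y)}$, and real-even or imaginary-odd structure dictates the sign.

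For Poiseuille flow, $\cR^P\varphi_j = \pm\varphi_j$ according to \cref{eq:eigenvalues_R}, with eigenvalue $+1$ for modes odd in $y$. Hence $c_j[\cR^P\psi]=\pm c_j[\psi]$, and a monomial survives the averaging iff the product of its modes is odd in $y$. Reality of $b_i$ follows from \cref{eq:eigenmodes_real} together with reality of $V$.

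The main obstacle is the bookkeeping in the Couette case: one must check precisely how $\cR^C$, which includes conjugation and maps $\cH_k\to\cH_k$, transforms $c_j$, and how this interacts with the pairing $c_{-j}=\ovl{c_j}$. Only then does one get the stated conclusion that all coefficients are real rather than merely Hermitian-symmetric. I would verify this first on a single monomial $c_ic_jc_\ell$ before stating the general rule.
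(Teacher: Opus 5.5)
Your route is the paper's. You average $\tilde V$ over streamwise translations and then over $\{I,\cR\}$, and read off which monomials survive from $c_j\mapsto e^{2\pi i k_j x_0/L}c_j$, $c_j\mapsto\ovl{c_j}$ and $c_j\mapsto\pm c_j$. Your treatment of translations, of $a>0$, and of the Couette case is sound.

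The step that fails is your Poiseuille rule that ``a monomial survives the averaging iff the product of its modes is odd in $y$.'' Let $s_j=\pm1$ denote the $y$-parity of $\varphi_j$. The change of variables gives $c_j[\cR^P\psi]=-s_j\,c_j[\psi]$, so a monomial of degree $m$ in the $c$'s is multiplied by $(-1)^m\prod s$.
\begin{itemize}
\item For $m=1$ and $m=3$ (the $b_i$ and $B_{ij\ell}$ terms), survival does mean the product is odd.
\item For $m=2$, the factor on $\ovl{c_i}c_j$ is $s_is_j$, so the $A_{k;ij}$ terms survive iff $\varphi_i\varphi_j$ is \emph{even}.
\end{itemize}
In Couette there is no $(-1)^m$: because $\cR^C$ includes conjugation, $c_i\mapsto\ovl{c_i}=s_ic_i$ on streamwise-constant modes, which is why ``even'' holds uniformly there.

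Your blanket rule therefore reproduces the $A$-condition exactly as printed in the statement, but that condition is a misprint and cannot be proved. The paper itself later picks $\varphi_2^P$ with $m=3$ precisely ``so that $\varphi_1\varphi_2$ is even in $y$.'' Read literally, `$A_{k;ij}\neq0$ only if $\varphi_i\varphi_j$ is odd' combined with $\cR^P$-invariance forces every $A_{k;ij}$ to vanish, including the diagonal ones, since $|\varphi_i|^2$ is always even. The quadratic part of $V$ would then be $aE$, so $[\dot V]_2=a\dot E$, which takes positive values whenever $\nRe>\nRe_E$. That version of the lemma would exclude exactly the Poiseuille Lyapunov functions the paper constructs above $\nRe_E$.

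The fix is to replace your rule with the $(-1)^m\prod s$ bookkeeping and state the $A$-condition with ``even.'' With that correction your argument matches the paper's.
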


        The justification of \cref{lem:symmetry_specific} is fairly straightforward. For one, it is clear that we must have $a>0$ in all cases, to ensure that $\tilde{V}[\psi]\geq \eps\|\nabla\psi\|^2$ when $\psi$ is small. The wavenumber coupling restrictions in \cref{eq:lemma_ansatz} are necessary to ensure that $V[T_{x_0}\psi] = V[\psi]$ for any translation $T_{x_0}$. For instance, if the term $c_ic_jc_\ell$ were such that $k_i+k_j+k_\ell=k\neq 0$, then $T_{x_0}$ would map
        \begin{equation}
            T_{x_0}:c_ic_jc_\ell\mapsto e^{2\pi i kx_0/L}c_ic_jc_\ell
        \end{equation}
and thus modify the value of $V[\psi]$. The additional claims for Couette and Poiseuille flow follow likewise from the actions of $\cR^C$ and $\cR^P$, respectively. Indeed, \cref{eq:eigenmodes_herm} implies
        \begin{equation}
            \cR^C:c_i\mapsto \ovl{c_i},\qquad \cR^P:c_i\mapsto \pm c_i,
        \end{equation}
        with the latter sign depending on the $y$-parity of $\varphi_i$. Requiring the full polynomial \cref{eq:lemma_ansatz} to remain invariant under each of these maps yields the stated constraints on coefficients.
		
		\Cref{lem:symmetry_specific} allows for significant dimension reduction in the polynomial space from which we seek $V$. For instance, the smallest real mode set used by \citet{fuentes_global_2022} for 2-D Couette flow has six modes. If they had not exploited symmetry (which they partially did), the quadratic term of their $V$ would be parameterized by a $6\times 6$ symmetric matrix and a coefficient on $E$, totaling 22 independent coefficients. On the other hand, \cref{lem:symmetry_specific} justifies a reduction to 6 coefficients, as we will see in the next subsection. Symmetry reduction in the cubic term is even more striking, reducing 62 real coefficients to 6, and the total reduction in coefficients becomes increasingly significant as the number of modes $N$ is raised. %Finally, and most importantly for computational purposes, symmetry reduction leads to structure in the SOS constraints that must be imposed in the final computational step of \cref{sec:finalsos}.

        \subsection{Minimal quartic Lyapunov functions for shear flows}
        \label{sec: V simple} 

        With \cref{lem:symmetry_specific} in place, we can now present our minimal Lyapunov function ansatzes for 2-D Couette and Poiseuille flow. These ansatzes are specific to a regime of $\nRe>\nRe_E$ where there is only one unstable energy eigenvalue $\lambda_1>0$, but may inform the construction of quartic Lyapunov functions more generally. 
        
        We first consider Couette flow, where we can take direct inspiration from the work of \citet{fuentes_global_2022}. The smallest mode set they consider contains six modes, which we adapt to our complex notation as $(\varphi_0^C,\varphi_{\pm 1}^C,\varphi_{\pm 2}^C,\varphi_3^C)$. Specifically:\vspace{0.05in}
		\begin{enumerate}
			\item The mode $\varphi_0^C\in\cH_0$ is $\varphi_0^C(y)=\frac{1}{2\pi\sqrt{L}}\cos(2\pi y)$. It is even in $y$ and thus purely real by~\cref{eq:eigenmodes_herm}.\vspace{0.05in}
			\item The modes $\varphi_{\pm 1}^C\in\cH_1$ are the only energy-unstable modes. These have wavenumber $k=\pm1$ and wall-normal mode number $m=1$. %Its real and imaginary parts are $\pi/2$ phase shifts of each other.
            \vspace{0.05in}
			\item The modes $\varphi_{\pm 2}^C\in\cH_2$ have wavenumber $k=\pm 1$ and wall-normal mode number $m=2$. %Its real and imaginary parts are $\pi/2$ phase shifts of each other.\vspace{0.05in}
			\item The mode $\varphi_3^C\in\cH_0$ is $\varphi_3(y)=\frac{i}{\pi\sqrt{L}}\sin(\pi y)$. It is odd in $y$ and thus purely imaginary by~\cref{eq:eigenmodes_herm}.\vspace{0.05in}
		\end{enumerate}
        Recall that we denote $\varphi_{-j}=\ovl{\varphi_j}\in\cH_{-j}$ and $c_{-j} = \ovl{c}_j$ for complex modes. The real modes denoted as $\mathbf u_n$ in \citet{fuentes_global_2022} relate to ours by $\mathbf u_n=\nabla^\perp\eta_n$, where
		\begin{equation}
        \label{eq:realcorrespondence}
			\begin{gathered}
				\eta_1=-i\varphi_3^C,\qquad \eta_2=\varphi_0^C,\vspace{0.05in}\\
				\eta_3=\Re\varphi_1^C,\qquad \eta_4 = \Im\varphi_1^C,\qquad \eta_5=\Re\varphi_2^C,\qquad \eta_6=\Im\varphi_2^C,
			\end{gathered}
		\end{equation}
        recalling that $\mathfrak R$ and $\mathfrak I$ denote real and imaginary parts of complex numbers, respectively. The top row of \cref{fig:energymodes} depicts the modes $\varphi_0^C$, $\varphi_1^C$ and $\varphi_2^C$ for Couette flow. In what follows we do not use $\varphi_3^C$, but it coincides with the mode $\varphi_0^P$ we will use for Poiseuille flow, shown in the bottom row of \cref{fig:energymodes}.
        
		\begin{figure}
			\centering
            \begin{subfigure}{\textwidth}
            \centering
            \caption{Couette modes}
                \begin{tikzpicture}
				%\node (img) {\includegraphics[scale=0.16,trim={8cm 0 0 0}, clip]{presentation/new_modes.png}};
                \node (img) {\includegraphics[scale=0.24]{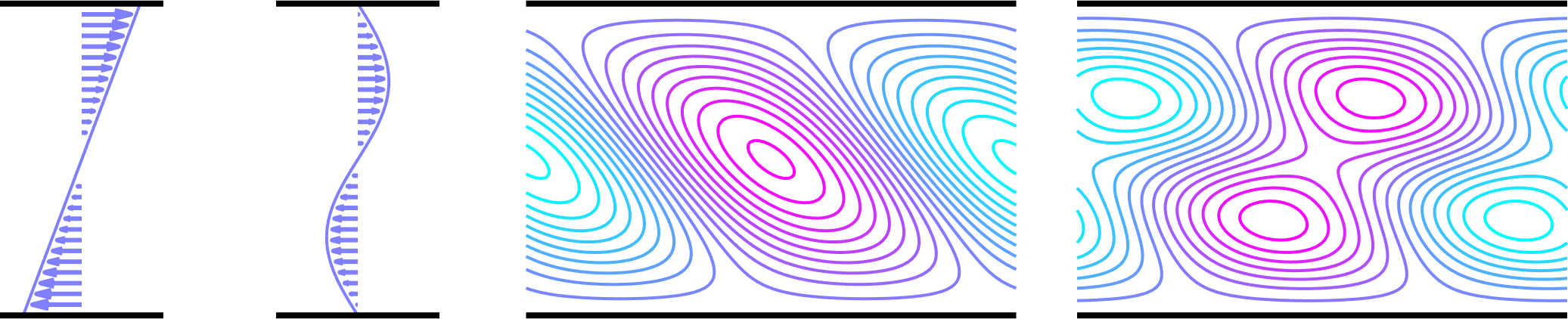}};
				%\draw (-5.4,-1.5) node [text=black,left] {\footnotesize $\varphi_3=\eta_1$};
                \draw (-5.3,-1.6) node [text=black,left] {\footnotesize $\nabla^\perp\Psi^C$};
				\draw (-3.0,-1.6) node [text=black,left] {\footnotesize $\nabla^\perp\varphi_0^C$};
				\draw (0.35,-1.6) node [text=black,left] {\footnotesize $\varphi_1^C$};
				\draw (4.95,-1.6) node [text=black,left] {\footnotesize $\varphi_2^C$};
			\end{tikzpicture}
            \end{subfigure}
            \begin{subfigure}{\textwidth}
            \centering
            \caption{Poiseuille modes}
                \begin{tikzpicture}
				%\node (img) {\includegraphics[scale=0.16,trim={8cm 0 0 0}, clip]{presentation/new_modes.png}};
                \node (img) {\includegraphics[scale=0.24]{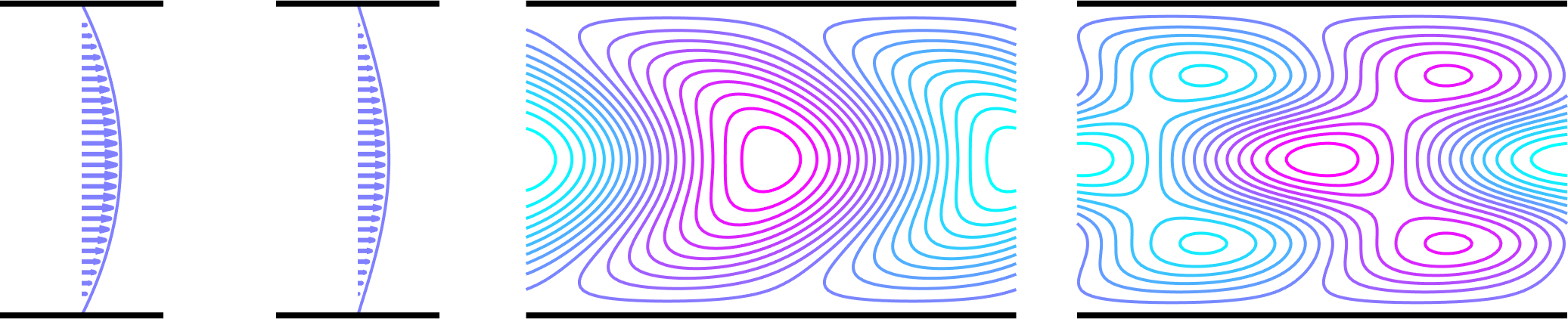}};
				%\draw (-5.4,-1.5) node [text=black,left] {\footnotesize $\varphi_3=\eta_1$};
                \draw (-5.3,-1.6) node [text=black,left] {\footnotesize $\nabla^\perp\Psi^P$};
				\draw (-3.0,-1.6) node [text=black,left] {\footnotesize $\nabla^\perp\varphi_0^P$};
				\draw (0.35,-1.6) node [text=black,left] {\footnotesize $\varphi_1^P$};
				\draw (4.95,-1.6) node [text=black,left] {\footnotesize $\varphi_2^P$};
			\end{tikzpicture}
            \end{subfigure}
			\caption{For 2-D plane Couette flow (top) and Poiseuille flow (bottom): the laminar base state $\Psi$, and the three energy eigenmodes $\varphi_j$ used in our ansatz. Velocity profiles are shown for the streamwise-constant fields $\Psi$ and $\varphi_0$. Streamlines are shown for the real parts of $\varphi_{1}$ and $\varphi_2$; their real and imaginary parts differ only by streamwise translations of $L/2$. The Couette modes correspond to a subset of the real energy modes of \citet{fuentes_global_2022} via \cref{eq:realcorrespondence}.}
			\label{fig:energymodes}
		\end{figure}		
		
		According to \cref{lem:symmetry_specific}, any quartic Lyapunov function depending only on $E$ and the six chosen modes can be made invariant under the symmetries of Couette flow by restricting to the form
		\begin{equation}\label{eq:ansatz2}
        \begin{aligned}
			V &= E^2 + c_0(w_1E + w_2c_0^2 + w_3c_3^2+B_{ij}\ovl{c_i}c_j) + c_3 C_{ij}\ovl{c_i}c_j\\
            &\qquad \qquad+ (w_4E + w_5c_0^2 + w_6c_3^2 + A_{ij}\ovl{c_i}c_j),
        \end{aligned}
		\end{equation}
        %\begin{equation}\label{eq:ansatz2}
        %\begin{aligned}
		%	V &= E^2 + b_0c_0E + B_{000}c_0^3 + B_{033}c_0c_3^2+B_{0ij} c_0\ovl{c_i}c_j +  B_{3ij}c_3\ovl{c_i}c_j\\
        %    &\qquad \qquad+ aE + A_{0;00}c_0^2 + A_{0;33}c_3^2 + A_{1;ij}\ovl{c_i}c_j,
        %\end{aligned}
		%\end{equation}
		where all coefficients are real, $A$ and $B$ are $2\times 2$ symmetric matrices, $C$ is a $2\times 2$ antisymmetric matrix (recalling that $c_3\in i\,\RR$), and repeated indices imply summation over $i,j\in\{1,2\}$.

To arrive at the simplest possible quartic $V$, we eliminate as many terms as possible from \cref{eq:ansatz2}. First, we retain only terms that fall within the general schematic \cref{eq:lyapunov_proposed} proposed in \cref{sec:schematic}. Only the first cubic term is of this form, so we discard the terms corresponding to $w_2$, $w_3$, $B_{ij}$, and $C_{ij}$. Next, we note that the quadratic term $[\dot{V}]_2$ cannot be negative definite when $\nRe>\nRe_E$ unless $V$ has at least one off-diagonal quadratic term, so the off-diagonal $A_{ij}$ terms are required. It can be shown that the $w_5$ and $w_6$ terms cannot help make $[\dot{V}]_2$ negative definite, so we discard them. Finally, by numerical experimentation with a code similar to that of \citet{fuentes_global_2022}, we find no benefit from diagonal terms in $A_{ij}$, so we set them to zero. Renaming the retained coefficients, we arrive at our chosen minimal ansatz for 2-D shear flows:
		\begin{equation}\label{eq:V}
			V[\psi] = E^2 + 2\gamma_0c_0E + \gamma_1E + \gamma_2 A_{ij}\ovl{c_{i}}c_{j},\qquad A = \begin{pmatrix}
				&e^{+i\theta}\\e^{-i\theta}&
			\end{pmatrix},
		\end{equation}
		where $\gamma_0,\gamma_1,\gamma_2,\theta\in\RR$ are coefficients to be chosen, and summation over the repeated indices $i,j\in\{1,2\}$ is implied. For Couette flow, the matrix $A$ is real per \cref{lem:symmetry_specific}; without loss of generality, we fix $\theta=0$ so that the entries of $A$ are 0 on the diagonal and 1 off the diagonal. However, one can allow for complex $A$ in other shear flows.
        
For Poiseuille flow, we adopt the same ansatz \cref{eq:V}, except that $\varphi_2$ is the energy mode with streamwise wavenumber $k=1$ and wall-normal mode number $m=3$, rather than $m=2$. This change is dictated by symmetry, following \cref{lem:symmetry_specific}, so that $\varphi_1\varphi_2$ is even in $y$. Depending on the relative phase between $\varphi_1$ and $\varphi_2$, the optimal choice of $A$ may not be real-valued. Once an optimal $A$ is known, this phase can be adjusted \emph{a posteriori} so that the optimal $A$ is the same real matrix (with $\theta=0$) as in Couette flow. In~\cref{sec:finalsos}, we give a heuristic for choosing the relative phase between $\varphi_1$ and $\varphi_2$ that allows us to fix $\theta=0$ \emph{a priori} in Poiseuille flow. % \cref{eq:alpha}.

%We remark that---in both cases---the simplifications we have made ensure that our functional conforms to the general schematic \cref{eq:lyapunov_proposed} proposed in \cref{sec:interpretation}. Specifically, the expression \cref{eq:V} is simply a restatement of the functional \cref{eq:lyapunov_primitive} in our refined notation.
        
%We emphasize that, in the case of Couette flow, one knows \emph{a priori} that $\vec{A}$ is real-valued, and thus (without loss of generality) that $\theta=0$. For Poiseuille flow, one can fix $\theta=0$ \emph{a posteriori} by adjusting the relative phase between the complex energy modes $\varphi_1$ and $\varphi_2$. We will discuss a heuristic that allows \emph{a priori} fixing $\theta=0$ for Poiseuille flow in \cref{sec:interpretation_primary} (see \cref{eq:alpha}). In any case, the expression \cref{eq:V} is simply a restatement of the functional \cref{eq:lyapunov_primitive} in our refined notation.

\section{Proving Lyapunov's criteria}
\label{sec:analysis}
        
Having identified the simplest ansatz \cref{eq:V} for $V$, we aim to show that it satisfies the criteria of \cref{def:lyapunov} at chosen $(L,\nRe)$ values, and thus deduce global stability. Most of our analysis is carried out by hand, but two steps are numerical. The first numerical step is computing approximate energy eigenmodes $\varphi_j$ and eigenvalues $\lambda_j$, as well as various integrals of the eigenmodes. The second numerical step is the solution of a small sum-of-squares (SOS) problem, whose computational cost is negligible compared to even the least expensive SOS computations of \citet{fuentes_global_2022} and \citet{iligaray_improved_2026}. We note that our results are not yet rigorous to the standard of a computer-assisted proof, due to these two numerical steps; both could be verified with the aid of interval arithmetic \citep{Kearfott1996}, but we leave this for future work.
		
There are three criteria in~\cref{def:lyapunov}. Our ansatz clearly satisfies the first condition, that $V[0] = 0$. Moreover, it turns out that one does not have to independently verify the second condition---that $V$ is positive definite---because this is implied by the third condition when $V$ is of the chosen form (see \cref{rem:negdef} below). All of our subsequent efforts are to verify the third condition, that $\dot V$ is negative definite.
\begin{remark}
\label{rem:negdef}
The negative definiteness of $\dot{V}$ is sufficient to deduce the positive definiteness of $V$. Indeed, since $V[\psi]\sim E[\psi]^2$ for large arguments, and $V[\psi]=V(c_j,q_n)$ is a polynomial in finitely many variables, it must have at least one finite global minimum. However, one must have $\dot{V}[\psi]=0$ at any global minimum of $V$. If $\dot{V}$ is negative definite (and thus nonzero whenever $\psi\neq 0$), it follows that the unique global minimum of $V$ is at $\psi=0$, and in turn that $V$ is positive definite.
\end{remark}
Taking the time derivative of $V$ gives
\begin{equation}
\label{eq:dotV}
			\dot{V}[\psi] = \dot{E}\left(2E + 2\gamma_0c_0 + \gamma_1\right) + 2\gamma_0E\dot{c}_0 + 2\gamma_2\Re [A_{ij}\ovl{c_i}\dot{c}_j].
\end{equation}
The primary difficulty in showing that $\dot V$ is negative definite is that, unlike $V$, it is not a function on a finite-dimensional space. Indeed, both $\dot{E}$ and $\dot{c}_j$ depend on the entire $\psi$ field, and neither can be determined by only the variables $(c_j,q_n)$. The next two subsections develop a bound of the form \cref{eq:intro_polybound} to reduce the problem to a finite-dimensional setting. We then show how to prove global stability over a continuum of $\nRe$ values, we formulate the necessary SOS problem, and we present concrete global stability results.	
		
\subsection{The time derivative of the Lyapunov function}

		The goal of the present subsection is to write $\dot{V}$ in a form more amenable to further analysis, so that we might show its negative definiteness in what follows. For this, it is helpful to further decompose the streamfunction $\psi$. Let $\cH_*\subset\cH_\CC$ be the complex span of $\varphi_{\pm 1}$ and $\varphi_{\pm 2}$ together. We split the streamfunction into a `primary' component $\psi^a$ and a `tail' component $\psi^b$ as follows:
		\begin{equation}
        \label{eq:psia}
			\psi = \psi^a + \psi^b,\qquad \psi^a(x,y)=2c_0\varphi_0(y) + \sum_{j=\pm 1,\pm2}c_j\ovl{\varphi_j}(x,y).
		\end{equation}
		The energy of the tail is denoted by $q^2=E[\psi^b]$. We decompose the tail into three real components based on the streamwise wavenumber:
		\begin{equation}
        \label{eq:tailcomponents}
			\begin{gathered}
				\psi^b = \psi^b_0(y,t) + \psi^b_1(x,y,t)+\psi^b_2(x,y,t),\vspace{0.05in}\\
				\psi^b_0\in\cH_0\cap\{\varphi_0\}^\perp,\qquad \psi^b_1\in (\cH_1+\cH_{-1})\cap\cH_*^\perp,\qquad\psi^b_2\in\cH_{\pm 2} + \cH_{\pm 3} + \cdots,
			\end{gathered}
		\end{equation}
		where $^\perp$ denotes the orthogonal complement. All components in~\cref{eq:tailcomponents} are orthogonal to $\varphi_0$ and $\cH_*$, which span the possible values of $\psi^a$. When needed, we further split each tail component into positive and negative frequencies:
		\begin{equation}
			\psi^b_n = \psi^b_{n,+} + \psi^b_{n,-},\qquad \psi^b_{n,\pm}\in \cH_{\pm},\qquad n = 1,2.
		\end{equation}
		
		\begin{remark}\label{rem:tailsplitting}
			Splitting the tail into Fourier components allows more sign-indefinite terms in $\dot{V}[\psi]$ to be deleted \emph{a priori} using orthogonality. Similar decomposition can be used to refine the numerical formulation of \citet{fuentes_global_2022}. We have carried out numerical experiments with such refinements, and we indeed find that they can significantly strengthen results.
		\end{remark}
		
		To make the $\dot V$ expression~\cref{eq:dotV} explicit, we need expressions for $\dot{E}$ and $\dot{c}_j$. With the tail decompositions we have introduced, the energy can be written as
		\begin{equation}
        \label{eq:E decomp}
			E[\psi] = \sum_{j=0}^2|c_j|^2 + \sum_{n=0}^2 q_n^2,\qquad q_n^2 = \tfrac{1}{2}\|\nabla\psi^b_n\|^2.
		\end{equation}
		Note our convention that $q^2$ denotes tail energy as in \citet{goulart_global_2012}, rather than $\tfrac12q^2$ as in \citet{fuentes_global_2022} and \citet{iligaray_improved_2026}. For $n=1,2$, we also note that $q_n^2 = \|\nabla\psi^b_{n,\pm}\|^2$. The expression~\cref{eq:energystability} for $\dot E$ gives
		\begin{equation}\label{eq:Edot_basic_bound}
			\dot{E}[\psi] = \sum_{j=0}^2 2\lambda_j|c_j|^2 + \sum_{n=0}^2\langle\cE\psi^b_n,\psi^b_n\rangle\leq \sum_{j=0}^2 2\lambda_j|c_j|^2 + \sum_{n=0}^22\kappa_n q_n^2,
		\end{equation}
where each $\kappa_{n}$ is the maximum eigenvalue of $\cE$ on the subspace containing  $\psi^b_n$ that is specified in~\cref{eq:tailcomponents}. Since the only energy-unstable modes in our regime of interest are $\varphi_{\pm 1}$, the only positive energy eigenvalue is $\lambda_1$, and so $\kappa_1$, $\kappa_2$, and $\kappa_3$ are strictly negative.
		
		The coefficients $c_j=-\langle\varphi_j,\nabla^2\psi\rangle$ can be differentiated using the vorticity evolution equation~\cref{eq:NS}, which gives
\begin{equation}
\label{eq:cdot_naive}
\begin{aligned}
\dot{c}_j &= \left\langle \varphi_j,\{\Psi + \psi,\zeta\} + \{\psi,Z\} - \nu\nabla^2\zeta\right\rangle \\
&=\langle \varphi_j,\psi,\zeta\rangle + \langle\varphi_j,\Psi,\zeta\rangle + \langle\varphi_j,\psi,Z\rangle - \nu\langle \varphi_j,\nabla^2\zeta\rangle,
\end{aligned}
\end{equation}
where we have introduced the triple product
		\begin{equation}\label{eq:tripleprod}
			\langle a,b,c\rangle= \langle a,\{b,c\}\rangle
		\end{equation}
for scalar fields. This triple product satisfies a number of identities that simplify our calculations. First, integration by parts gives
\begin{equation}
\langle\{a,b\},c\rangle = \langle a,\{b,c\}\rangle
\end{equation}
in several useful cases: (1) when at least two of the fields lie in $\cH_\CC$, which requires their gradients to vanish at the boundary, (2) when at least one of the fields vanishes at the boundary, or (3) when one field depends only on $y$ and another lies in $\cH_\CC$. In all of these cases, the triple product is anticommutative in all arguments,
\begin{equation}\label{eq:tripleprod_anticommute}
\langle a, b, c\rangle = -\langle b,a,c\rangle = -\langle a,c,b\rangle = -\langle c,b,a\rangle,
\end{equation}
and thus it is unchanged by cyclic permutations:
\begin{equation}\label{eq:tripleprod_shift}
			\langle a, b, c\rangle = \langle c, a, b\rangle = \langle b, c, a\rangle.
\end{equation}
For any fields $a,b,c$, integration by parts also gives the identity
\begin{equation}
\label{eq:tripleprod_integral}
			\langle a_x,b,c\rangle = -\langle a,b_x,c\rangle - \langle a,b,c_x\rangle,
\end{equation}
and likewise for $y$ derivatives, as well as
		\begin{equation}\label{eq:tripleprod_Qidentity}
			\langle a,b,\nabla^2c\rangle + \langle a,c,\nabla^2b\rangle = -\langle\nabla a,b,\nabla c\rangle - \langle \nabla a,c,\nabla b\rangle = \langle\nabla b \cdot Q_a\cdot\nabla c\rangle,
		\end{equation}
with $Q_a$ defined by \cref{eq:bigQ}. When any two of the fields are vectors, the triple product is defined by summing over the vector indices as in, e.g., $\langle \mathbf{a},\mathbf{b},c\rangle=\sum_j\langle a_j,b_j,c\rangle$. %The identities \cref{eq:tripleprod_anticommute}, \cref{eq:tripleprod_shift}, \cref{eq:tripleprod_integral}, and \cref{eq:tripleprod_Qidentity} hold element-wise for vector-valued arguments.
		
The expression~\cref{eq:cdot_naive} for $\dot c_j$ can be manipulated to another useful form by applying the identities \cref{eq:energystability,eq:tripleprod_Qidentity} and integrating by parts several times:
		\begin{equation}\label{eq:cdot_alt}
			\begin{aligned}
				\dot{c}_j &= \langle \varphi_j,\psi,\zeta\rangle + \langle\varphi_j,\Psi,\zeta\rangle + \langle\varphi_j,\psi,Z\rangle - \nu\langle \varphi_j,\nabla^2\zeta\rangle \\
				&= \langle \varphi_j,\psi,\zeta\rangle + \langle\varphi_j,\Psi,\zeta\rangle + \langle\varphi_j,\psi,Z\rangle + \langle\cE\varphi_j,\psi\rangle + \tfrac{1}{2}\langle\nabla \varphi_j\cdot Q_\Psi\cdot\nabla\psi\rangle \\
				&= \langle \varphi_j,\psi,\zeta\rangle + \langle\varphi_j,\Psi,\zeta\rangle + \langle\varphi_j,\psi,Z\rangle + \langle\cE\varphi_j,\psi\rangle - \tfrac{1}{2}\langle\nabla\Psi,\varphi_j,\nabla\psi\rangle - \tfrac{1}{2}\langle\nabla\Psi,\psi,\nabla\varphi_j\rangle\\
				&= \langle \varphi_j,\psi,\zeta\rangle - \langle\nabla\varphi_j,\Psi,\nabla\psi\rangle -\tfrac{1}{2} \langle\varphi_j,\nabla\Psi,\nabla\psi\rangle+ \langle\varphi_j,\psi,Z\rangle + \langle\cE\varphi_j,\psi\rangle - \tfrac{1}{2}\langle\nabla\Psi,\psi,\nabla\varphi_j\rangle\\
                &= \langle \varphi_j,\psi,\zeta\rangle + \tfrac{1}{2}\langle\varphi_j,\psi,Z\rangle  + \langle\cE\varphi_j,\psi\rangle + \langle\Psi,\nabla\varphi_j,\nabla\psi\rangle.
			\end{aligned}
		\end{equation}
	It is also helpful to decompose $\dot c_j$ by splitting $\psi=\psi^a+\psi^b$. Leveraging \cref{eq:tripleprod_Qidentity}, we find
		\begin{equation}\label{eq:cdot_d}
			\dot{c}_j = d_j[\psi^a] + d_j[\psi^b] + \langle \nabla\psi^a\cdot Q_{\varphi_j}\cdot\nabla\psi^b\rangle,
		\end{equation}
    where $d_j$ is defined as
    \begin{equation}
    \label{eq:dj}
    \begin{aligned}
				d_j[f] &= \langle \varphi_j,f,\nabla^2f\rangle + \langle\varphi_j,\Psi,\nabla^2f\rangle + \langle\varphi_j,f,Z\rangle - \nu\langle \varphi_j,\nabla^4f\rangle\\&= \langle \varphi_j,f,\nabla^2f\rangle - \lambda_j\langle\varphi_j,\nabla^2f\rangle + \langle \Psi, \nabla\varphi_j,\nabla f\rangle + \tfrac{1}{2}\langle Z,\varphi_j,f\rangle
			\end{aligned}
    \end{equation}
    for any $f\in\cH_\CC$. The first and second lines of~\cref{eq:dj} come from the $\dot c_j$ expressions in the last lines of~\cref{eq:cdot_naive,eq:cdot_alt}, respectively, as well as the identity $\cE\varphi_j = -\lambda_j\nabla^2\varphi_j$.

    With this notation in hand, as well as the decomposition $\psi=\psi^a+\psi^b$, we split the expression~\cref{eq:dotV} for $\dot{V}$ into `primary' and `tail' components,
        \begin{equation}
        \label{eq:Vab}
            \dot{V} = [\dot{V}]_a + [\dot{V}]_b,
        \end{equation}
        defined by
        \begin{equation}
        \label{eq:VaVb}
            \begin{aligned}
			[\dot{V}]_a &= \dot{E}[\psi^a]\left(2E + 2\gamma_0c_0 + \gamma_1\right) + 2\gamma_0E d_0[\psi^a] + 2\gamma_2\Re (A_{ij}\ovl{c_i}d_j[\psi^a]),\vspace{0.05in}\\
			[\dot{V}]_b &= \dot{E}[\psi^b]\left(2E + 2\gamma_0c_0 + \gamma_1\right) + 2\gamma_0E d_0[\psi^b] + 2\gamma_2\Re (A_{ij}\ovl{c_i}d_j[\psi^b])\\
			&\qquad + 2\gamma_0E \langle \nabla\psi^a\cdot Q_{\varphi_0}\cdot\nabla\psi^b\rangle + 2\gamma_2\Re (A_{ij}\ovl{c_i}\langle \nabla\psi^a\cdot Q_{\varphi_j}\cdot\nabla\psi^b\rangle).
            \end{aligned}
		\end{equation}
    In other words, $[\dot V]_a$ collects all terms in $\dot V$ that are fully determined by $\psi^a$, where $[\dot V]_b$ collects all terms that depend on $\psi^b$. We turn now to the problem of ensuring that $\dot{V}$ is negative definite.
        
    \subsection{Polynomial bounds on the time derivative}
    \label{sec:fourthterm}
    
    To verify that $\dot{V}[\psi]$ is negative definite, we bound it above by a polynomial in the six variables $(c_j,q_n)$. Then we can verify that this polynomial is negative definite by solving an SOS problem formulated in \cref{sec:finalsos}. The primary component $[\dot{V}]_a$ is already a polynomial in $(c_j,q_n)$, so bounds are needed only for the tail component $[\dot V]_b$.
    % We will discuss $[\dot{V}]_a$ in greater detail in \cref{sec:interpretation_primary}, to compare it against the argument of \cref{sec:interpretation}, but we turn now to the problem of bounding $[\dot{V}]_b$.
        
    There are five terms in the expression~\cref{eq:VaVb} for $[\dot{V}]_b$, all depending on the infinite-dimensional tail term $\psi^b$. We use similar analytical methods to derive upper bounds on each of the five terms. Here, we illustrate these methods by detailing the upper bound on the fourth term alone. Bounds for the other four terms are derived in \cref{app:estimates}.
    
    Starting with the integral in the fourth term of~\cref{eq:VaVb}, we expand $\psi^a$ and $\psi^b$ in their components, as in~\cref{eq:psia,eq:tailcomponents}, to find
        \begin{equation}
            \begin{aligned}
            \label{eq:4th term 1}
				\langle\nabla\psi^a\cdot Q_{\varphi_0}\cdot\nabla\psi^b\rangle &= \langle\nabla(2c_0\varphi_0 + c_i\ovl{\varphi_i} + \ovl{c_i}\varphi_i)\cdot Q_{\varphi_0}\cdot \nabla(\psi^b_0+\psi^b_1 + \psi^b_2)\rangle\\
                &= 2c_0\langle\nabla\varphi_0\cdot Q_{\varphi_0}\cdot \nabla\psi^b_0\rangle + 2\Re\langle c_i\nabla \ovl{\varphi_i}\cdot Q_{\varphi_0}\cdot \nabla\psi^b_{1,+}\rangle\\
                &=2\Re\langle c_i\nabla\ovl{\varphi_i}\cdot Q_{\varphi_0}\cdot \nabla\psi^b_{1,+}\rangle,
            \end{aligned}
		\end{equation}
        with summation over $i\in\{1,2\}$ implied. Most terms have vanished by orthogonality. In particular, $x$-integrals of $\nabla a_1\cdot Q_{a_2}\cdot \nabla a_3$ vanish for any triples of $a_i\in\cH_{k_i}$ with $k_1+k_2+k_3\neq 0$. Similarly, the $y$-integral of $\nabla\varphi_0\cdot Q_{\varphi_0}\cdot\nabla\psi^b_0$ must vanish. Noting also that $\|\nabla\psi^b_{1,+}\|=q_1$, we continue from~\cref{eq:4th term 1} and use Young's inequality~\cref{eq:young} to find
        \begin{equation}
        \label{eq:bound_to_compare}
            \begin{aligned}
                \left|\langle\nabla\psi^a\cdot Q_{\varphi_0}\cdot\nabla\psi^b\rangle \right|
                &\leq 2\| c_iQ_{\varphi_0}\cdot \nabla\ovl{\varphi_i}\| \,|q_1|\\
                &\leq \frac{1}{\eps_3}\|c_iQ_{\varphi_0}\cdot\nabla\ovl{\varphi_i} \|^2 + \eps_3q_1^2\\
                &=\frac{1}{\eps_3}\langle \ovl{c_i}\ovl{Q_{\varphi_0}}\cdot\nabla\varphi_i,c_jQ_{\varphi_0}\cdot\nabla\ovl{\varphi_j}\rangle + \eps_3q_1^2\\
                &=\frac{1}{\eps_3}\langle \ovl{Q_{\varphi_0}}\cdot\nabla\varphi_i,Q_{\varphi_0}\cdot\nabla\ovl{\varphi_j}\rangle \ovl{c_i}c_j + \eps_3q_1^2
            \end{aligned}
        \end{equation}
        for any $\eps_3>0$, where summation over $i,j\in\{1,2\}$ is implied. In all, the fourth term in the definition~\cref{eq:VaVb} of $[\dot V]_b$ is bounded according to
        \begin{equation}
        \label{eq:bound4}
            \left|2\gamma_0E\langle\nabla\psi^a\cdot Q_{\varphi_0}\cdot\nabla\psi^b\rangle\right| \leq 2|\gamma_0|E\left(\frac{1}{\eps_3}\langle \ovl{Q_{\varphi_0}}\cdot\nabla\varphi_i,Q_{\varphi_0}\cdot\nabla\ovl{\varphi_j}\rangle\ovl{c_i}c_j + \eps_3q_1^2\right).
        \end{equation}
        The left-hand side of~\cref{eq:bound4} depends on the full field $\psi=\psi^a+\psi^b$, whereas the right-hand side is determined by only the $c_j$ and $q_n$ variables (as well as choices of $\gamma_0$ and $\eps_3$). This has met our objective: to bound a term involving $\psi^b$ with one that is polynomial in a finite-dimensional space. Similar bounds for the other four terms in $[\dot{V}]_b$ appear in \cref{app:estimates} as \cref{eq:bound1}, \cref{eq:bound2}, \cref{eq:bound3}, and \cref{eq:bound5}.
        
        Combining these five bounds with the explicit expression for $[\dot V]_a$ yields an upper bound of the form
        \begin{equation}\label{eq:polybound_ours}
            \dot{V}[\psi]\leq S(c_j,q_n),
        \end{equation}
        where $S$ is a real-valued, quartic polynomial in the real variables $(c_0,q_0,q_1,q_2)$ and the complex variables $(c_1,c_2)$. The coefficients of this polynomial depend on the coefficients $(\gamma_0,\gamma_1,\gamma_2,\theta)$ in our ansatz~\cref{eq:V} for $V$, and on eight positive constants $\eps_0,...,\eps_7$ introduced via Young's inequality. More explicitly, the polynomial $S$ takes the form
        \begin{equation}
        \label{eq:fullpoly}\small
        	\begin{aligned}
        		S(c_j,q_n) &= 2\big(\lambda_m|c_m|^2+\kappa_n q_n^2\big)\left(2E + 2\gamma_0c_0 + \gamma_1\right)+ 2\gamma_0\lambda_0 Ec_0  + \left(\gamma_2B_{ij} + \gamma_0 C_{ij}E + \gamma_2D_{ij}c_0\right)\ovl{c_i}c_j\\
        		&\qquad +\gamma_2\omega q_1^2 + \gamma_0\tau_nEq_n^2+ \gamma_2\big(\eta_0 + \eta_j|c_j|^2\big)\big(\sqrt{2}q_0q_1 + q_1q_2 + q_2^2\big)+ \gamma_2\beta_{n,m}q_n|c_m|^2,
        	\end{aligned}
        \end{equation}
        where summation over $i,j\in\{1,2\}$ and $m,n\in\{0,1,2\}$ is implied. The coefficients $B_{ij}$, $C_{ij}$, $D_{ij}$, $\kappa_n$, $\eta_j$, $\beta_{n,m}$, and $\omega$ depend only on $\theta$ and $\eps_m$; expressions for all of them  are given in \cref{app:estimates}. The polynomial $S$ is always real, despite $c_1$ and $c_2$ being complex, because the matrices $B_{ij}$, $C_{ij}$, and $D_{ij}$ are Hermitian. 
        
        We will formulate an SOS condition in \cref{sec:finalsos} that implies negative definiteness of the polynomial $S(c_j,q_n)$ and, in turn, of $\dot V[\psi]$. In order to find values of the $\gamma_i$ and $\eps_m$ for which the SOS condition holds, it is ideal if they are optimization parameters of the SOS computation itself. Then, as the computation converges towards an SOS representation with each iteration, the $\gamma_i$ and $\eps_m$ also converge to values that make the SOS representation possible. However, in order for the SOS problem to be convex and solvable with standard methods, optimization parameters must only appear linearly in the SOS constraints. Our upper bounds on terms in $[\dot V]_b$ in \cref{sec:fourthterm,app:estimates}, which lead to $S$, fail to be linear in $\gamma_i$ and $\eps_m$, but this can be repaired in two ways. 
        
        The first impediment to linearity is that our bounds on $[\dot V]_b$ involve the absolute values $|\gamma_i|$. We fix this by anticipating that all optimal $\gamma_i$ values are positive. This allows us to add constraints that $\gamma_i\ge0$ and remove the absolute values in the $[\dot V]_b$ bounds, as has already been done in the expression~\cref{eq:fullpoly} for $S$. Note that signs of the optimal $\gamma_0$ or $\gamma_2$ can change if energy modes are defined with different sign conventions, while $\gamma_1\geq 0$ is required anyways for positive definiteness of $V$.

        The second impediment to linearity is the $\eps_m$ parameters, which have entered the bounds on $[\dot V]_b$ via Young's inequality. In these bounds, both $\eps_m$ and $1/\eps_m$ appear linearly, and in some cases they multiply $|\gamma_m|$ as well. This is reflected in the coefficients of $S$ given in \cref{sec:S coeff}. However, these particular nonlinearities from Young's inequality can be reformulated using semidefinite constraints that are linear in $\eps_m$ and $\gamma_i$. To see how, consider an abstract SOS setting with real, independent variables $\vec{z}\in\RR^p$. Young's inequality gives
        \begin{equation}
        \label{eq:young_SOS_example}
            \|\vec{v}(z_j)\|\,\|\vec{w}(z_j)\| \leq \frac{\eps}{2}\|\vec{v}(z_j)\|^2 + \frac{1}{2\eps}\|\vec{w}(z_j)\|^2
        \end{equation}
        for all $\eps>0$, where $\vec{v}(z_j)\in\RR^m$ and $\vec{w}(z_j)\in\RR^n$ are linear functions of $\vec{z}$ that involve no optimization parameters. Suppose that the right-hand terms are part of a polynomial expression that is constrained to be SOS, and that we want $\eps$ to be an optimization parameter in the SOS computation. This can be done be done by reformulating~\cref{eq:young_SOS_example} as
        \begin{equation}\label{eq:deltaeps_1}
            \|\vec{v}(z_j)\|\,\|\vec{w}(z_j)\| \leq \frac{\eps}{2}\|\vec{v}(z_j)\|^2 + \frac{\delta}{2}\|\vec{w}(z_j)\|^2,\qquad \begin{pmatrix}
                \eps & 1\\ 1& \delta
            \end{pmatrix}\succeq0.
        \end{equation}
        Then the right-hand side of~\cref{eq:deltaeps_1} can be used in place of the right-hand side of~\cref{eq:young_SOS_example}, provided the semidefiniteness constraint in~\cref{eq:deltaeps_1} is appended to the SOS problem. In cases where the term to be bounded is also multiplied by $|\gamma|$, where $\gamma$ is another optimization parameter, the bound~\cref{eq:deltaeps_1} can be generalized as
        \begin{equation}\label{eq:deltaeps_2}
            |\gamma|\, \|\vec{v}(z_j)\|\,\|\vec{w}(z_j)\| \leq \frac{\eps}{2}\|\vec{v}(z_j)\|^2 + \frac{\delta}{2}\|\vec{w}(z_j)\|^2,\qquad \begin{pmatrix}
                \eps & \gamma\\ \gamma& \delta
            \end{pmatrix}\succeq0 .
        \end{equation}
        For instance, this lets our bound~\cref{eq:bound4} on the fourth term of $[\dot V]_b$ be reformulated as
        \begin{equation}
            |\gamma_0|\left|\langle\nabla\psi^a\cdot Q_{\varphi_0}\cdot\nabla\psi^b\rangle\right|\leq \tilde{\delta}_3\langle \ovl{Q_{\varphi_0}}\cdot\nabla\varphi_i,Q_{\varphi_0}\cdot\nabla\ovl{\varphi_j}\rangle\ovl{c_i}c_j + \tilde{\eps}_3 q_1^2,\qquad \begin{pmatrix}
                \tilde{\eps}_3 & \gamma_0\\\gamma_0&\tilde{\delta}_3
            \end{pmatrix}\succeq 0,
        \end{equation}
        where we have changed variables by $\tilde{\eps}_3=\gamma_0\eps_3$. This approach is used in \cref{sec:finalsos} to reformulate all terms in $S$ that are not linear in the parameters $(\gamma_i,\eps_m)$.

        \subsection{Treating continua of Reynolds numbers}
        \label{sec:continua}

        A major limitation of previous works \citep{goulart_global_2012,fuentes_global_2022,iligaray_improved_2026} is that their SOS conditions can verify global stability at a specified Reynolds number $\nRe^*$, but they cannot imply the same for smaller Reynolds numbers $\nRe<\nRe^*$. The present subsection overcomes this limitation as follows. Suppose a chosen $V$ is a Lyapunov functional at a fixed viscosity value $\nu^*$, meaning that it satisfies
        \begin{equation}
        \label{eq:V before dV/dnu}
            \dot{V}[\psi] < 0\qquad \text{for}\;\psi\neq 0
        \end{equation}
        when $\nu=\nu^*$. Recall that $\dot{V}[\psi]$ depends on $\nu$ as well as the streamfunction. If one can further show that
        \begin{equation}
        \label{eq:dV/dnu condition}
            \frac{\d}{\d\nu} \dot{V}[\psi] \leq 0
        \end{equation}
        for all $\nu\geq\nu^*$, this implies~\cref{eq:V before dV/dnu} at such $\nu$. It would follow that $V$ is a Lyapunov function at all $\nu\ge\nu^*$, i.e., at all $\nRe\le\nRe^*$.

        The additional condition~\cref{eq:dV/dnu condition} is relatively simple to verify---for the present fluid models and many others---because the viscosity necessarily appears linearly in $\dot V$. For any fixed functional $V$ on $\cH_\RR$, its time derivative takes the form
        \begin{equation}
        \label{eq:F1 F2}
            \dot{V}[\psi] = F_1[\psi] + \nu F_2[\psi]
        \end{equation}
        for $\nu$-independent functionals $F_1$ and $F_2$, and so $\frac{\d}{\d\nu}\dot{V}[\psi]=F_2[\psi]$. 
        %If one can show that $F_2$ is everywhere non-positive, then for any values $\nu>\nu^*>0$, one has
        %\begin{equation}
        %    \dot{V}[\psi] = F_1[\psi] + \nu F_2[\psi] = (F_1[\psi] + \nu^*F_2[\psi]) + (\nu-\nu^*)F_2[\psi] \leq F_1[\psi] + \nu^*F_2[\psi]
        %\end{equation}
        %for all $\psi\in\cH_\RR$. In particular, if $\dot{V}$ is negative definite at the viscosity $\nu^*$ and $F_2$ is non-positive, it follows that $\dot{V}$ is negative definite for all $\nu>\nu^*$. 
        We emphasize that, in our context, $V$ is constructed using $\varphi_j$ that are energy modes at viscosity $\nu^*$, so the only quantity in~\cref{eq:F1 F2} that changes with viscosity is $\nu$ itself.

        For our chosen $V$ ansatz, the expression for $F_2[\psi]$ can be found from \cref{eq:dotV,eq:cdot_alt,eq:edot} to be
        \begin{equation}
            \frac{\d}{\d\nu}\dot{V}[\psi] = -\|\zeta\|^2(2E + 2\gamma_0c_0 + \gamma_1) - 2\gamma_0 E\langle\varphi_0,\nabla^2\zeta\rangle - 2\gamma_2\Re\left[A_{ij}\ovl{c}_i\langle\varphi_j,\nabla^2\zeta\rangle\right].
        \end{equation}
        We prove that this quantity is negative definite in a somewhat different way from how we handled $\dot{V}$ in \cref{sec:fourthterm}. First, we define $Z_j = \|\nabla^2\varphi_j\|^2$ for $j\in\{0,1,2\}$, so that
        \begin{equation}
            |\langle\varphi_j,\nabla^2\zeta\rangle| = |\langle\nabla^2\varphi_j,\zeta\rangle| \leq \sqrt{Z_j}\,\|\zeta\|.
        \end{equation}
        Applying Young's inequality several times gives
        \begin{equation}
            \begin{aligned}
                \frac{\d}{\d\nu}\dot{V}[\psi] &\leq -2\|\zeta\|^2 E - \gamma_1\|\zeta\|^2 + |\gamma_0|\|\zeta\|^2(\eps_8|c_0|^2 +1/\eps_8)\\
                &\qquad + |\gamma_0|E(\eps_9Z_0\|\zeta\|^2 + 1/\eps_9) + 2|\gamma_2|\left(|c_1|\sqrt{Z_2} + |c_2|\sqrt{Z_1}\right)\|\zeta\|\\
                &\leq -2\|\zeta\|^2 E - \gamma_1\|\zeta\|^2 + |\gamma_0|\|\zeta\|^2(\eps_8|c_0|^2 +1/\eps_8)\\
                &\qquad + |\gamma_0| E(\eps_9Z_0\|\zeta\|^2 + 1/\eps_9) + |\gamma_2|\eps_{10}\big(|c_1|^2+ |c_2|^2\big) \\
                &\qquad + \tfrac1{\eps_{10}}|\gamma_2|\left(Z_2 + Z_1\right)\|\zeta\|^2
            \end{aligned}
        \end{equation}
        for any $\eps_m>0$, where these three $\eps_m$ are to be chosen along with the seven in the coefficients of $S$. Next, Poincar\'e's inequality yields
        \begin{equation}
            E\leq \frac{1}{2\pi^2} \|\zeta\|^2,\qquad |c_1|^2 + |c_2|^2 \leq \frac{1}{2\left[\pi^2 + (2\pi/L)^2\right]}\|\zeta\|^2,
        \end{equation}
        in which the latter has an improved constant because streamwise-constant modes are excluded. Finally, recalling that all $\gamma_i\ge0$, as explained at the end of \cref{sec:fourthterm}, we recover the inequality
        \begin{equation}\label{eq:Vdot_nu}
            \frac{\d}{\d\nu}\dot{V}[\psi] \leq -s_1 \|\zeta\|^2E - s_2\|\zeta\|^2,
        \end{equation}
        where
        \begin{equation}
        \begin{aligned}
        \label{eq:si def}
            s_1 &= 2 - \gamma_0\eps_8 - \gamma_0Z_0\eps_9,\\
            s_2 &= \gamma_1 - \gamma_0/\eps_8 - \frac{1}{2\pi^2}\gamma_0/\eps_9 - \frac{\gamma_2\eps_{10}}{2\left[\pi^2 + (2\pi/L)^2\right]} - \gamma_2(Z_2 + Z_1)/\eps_{10}.
        \end{aligned}            
        \end{equation}
    If one can verify that $s_1,s_2\geq0$ and that $\dot{V}$ is negative definite at a fixed $\nRe^*$, then \cref{eq:Vdot_nu} implies that $\dot{V}$ is negative definite at all $\nRe\leq\nRe^*$, and global stability follows at all of these~$\nRe$. We note that the optimal value of $\eps_{10}$ could be deduced \emph{a priori} from the values of $Z_1$ and $Z_2$, if desired.
    
    Following the strategy described at the end of \cref{sec:fourthterm}, one can ensure that $s_1$ and $s_2$ depend only linearly on a larger set of optimization parameters $(\gamma_m,\eps_m,\delta_m)$, which lets the inequalities $s_1,s_2\geq 0$ be incorporated into a larger SOS problem in the next subsection. We do not need the notation $\nRe^*$ below, so we revert to $\nRe$.
         
     %We further manipulate the expressions~\cref{eq:si def} such that optimization parameters appear only linearly, because this is needed in the next subsection to enforce nonnegativity of $s_1$ and $s_2$ in an SOS computation. For $\eps_{10}$, one can \emph{a priori} choose the optimal value that maximizes the last two terms in $s_2$ at fixed $\gamma_2$. For $\eps_8$ and $\eps_9$, we follow the approach described at the end of \cref{sec:fourthterm}, letting $\tilde\eps_m=\gamma_0\eps_m$ and $\tilde\delta_m\ge\gamma_0/\eps_m$. Then, 
     %\begin{equation}
     %\begin{aligned}
     %\label{eq:si def}
     %s_1 \ge \tilde s_1 =& 2 - \tilde\eps_8 - Z_0\tilde\eps_9,\\
     %s_2 \ge \tilde s_2 =& \gamma_1 - \tilde\delta_8 - \frac{1}{2\pi^2}\tilde\delta_9 - \gamma_2\sqrt{\frac{2(Z_1+Z_2)}{\pi^2 + (2\pi/L)^2}}
    %\end{aligned}          
    %\end{equation}
    %for any $\tilde\eps_m$ and $\tilde\delta_m$ that satisfy $M_m\succeq0$ with
    %\begin{equation}
    %\label{eq:Mm}
    %M_m=\begin{bmatrix}
    %\tilde\eps_m&\gamma_0\\ \gamma_0&\tilde\delta_m
    %\end{bmatrix}.
    %\end{equation}
    %Rather than enforcing $s_1,s_2\ge0$, we will enforce $\tilde s_1,\tilde s_2\ge0$ along with $M_m\succeq0$.
        
		\subsection{Sum-of-squares optimization}
        \label{sec:finalsos}
        
		Although the expressions involved in the key bounds \cref{eq:polybound_ours} and \cref{eq:Vdot_nu} are complicated, an SOS procedure can ensure that 
		\begin{equation}
        \label{eq:constraints before sos}
		S(c_j,q_n)\leq -rE[\psi]^2\qquad\text{and}\qquad s_1,s_2\geq 0
		\end{equation}
        for all $\psi\in\cH_\RR$ at some tolerance $r>0$ and some chosen pair of $(L,\nRe)$ values. As discussed in the preceding subsections, verifying~\cref{eq:constraints before sos} implies global stability for our system at the streamwise period $L$ and all Reynolds numbers up to $\nRe$.

The SOS procedure is carried out as follows. We strengthen the first inequality in \cref{eq:constraints before sos} into an SOS condition, meaning we require that $-S - rE^2$ can be written as a sum of squares of lower-degree polynomials:
        \begin{equation}
            -S(c_j,q_n)^2 - rE[\psi]^2 = \sum_{m=1}^M \left[p_m(c_j,q_n)\right]^2.
        \end{equation}
This condition is stronger than negative definiteness of $S$, but also much more tractable to verify; SOS conditions can be recast as semidefinite programs \citep{parrilo_semidefinite_2003}, which can be solved numerically using interior point algorithms. Notably, such algorithms can optimize values of parameters that appear linearly in $S$, such as the coefficients $\gamma_j$ of our candidate functional \cref{eq:V} and the parameters $\eps_m$ and $\delta_m$ arising from our use of Young's inequality, as discussed at the end of \cref{sec:fourthterm}. Finally, SOS optimization also allows for semidefinite constraints that are linear in the optimization parameters, so we can include the $\gamma_j\geq 0$ constraints discussed in \cref{sec:fourthterm}, as well as the constraints $M_m\succeq 0$ arising from each use of Young's inequality, where $M_m(\eps_m,\delta_m)$ are $2\times 2$ matrices of the form \cref{eq:deltaeps_1} or \cref{eq:deltaeps_2}. The only parameter that cannot be optimized during the SOS computation is the phase $\theta$ appearing in our ansatz \cref{eq:V}, which we discuss shortly. In all, our SOS problem reads as follows:
        \begin{equation}
        \label{eq:rsos}
            r_\text{SOS}(\nRe, L,\theta) = \max_{\substack{(\gamma_j,\eps_m,\delta_m)}} r\in\RR\quad\text{s.t.}\quad\begin{array}[t]{l}
        -S-rE^2\;\text{is SOS} \\
        s_1,s_2\geq 0\\
        \gamma_j\geq0\\
        M_m\succeq0.
        \end{array}
        \end{equation}
        The maximum value $r_\text{SOS}(L,\nRe,\theta)$ is inexpensive to compute using standard software, taking approximately $0.3$ seconds on a laptop. If the value of $r_\mathrm{SOS}$ is strictly positive for some value of $\theta$, it follows that flows with streamwise period $L$ must be globally stable up to $\nRe$. For Couette flow, $\theta=0$ can be fixed \emph{a priori}, following \cref{lem:symmetry_specific}. For Poiseuille flow, we carry out a nonlinear optimization over $0<\theta\leq 2\pi$ by solving the SOS problem~\cref{eq:rsos} repeatedly with different $\theta$ values. However, we find empirically that the resulting value of $\theta$ is always such that the following integral is real and positive:
        \begin{equation}
            \mu = \langle\Psi,\nabla\varphi_2,\nabla\ovl{\varphi_1}\rangle + \tfrac{1}{2}\langle Z, \varphi_2,\ovl{\varphi_1}\rangle.
        \end{equation}
This choice of $\theta$ ensures that $[\dot{V}]_2$ is maximally negative definite when restricted to the $\cH^*$ subspace, in the sense that $[\dot V]_2\le -cE$ for all $\psi\in\cH^*$ with the largest possible $c$. Note that one could fix the value of $\theta$ in 2-D Poiseuille flow \emph{a priori} by requiring $\mu\geq 0$, and thus sidestep the nonlinear optimization over $\theta$.

        \begin{figure}
            \centering
            \includegraphics[width=0.85\linewidth]{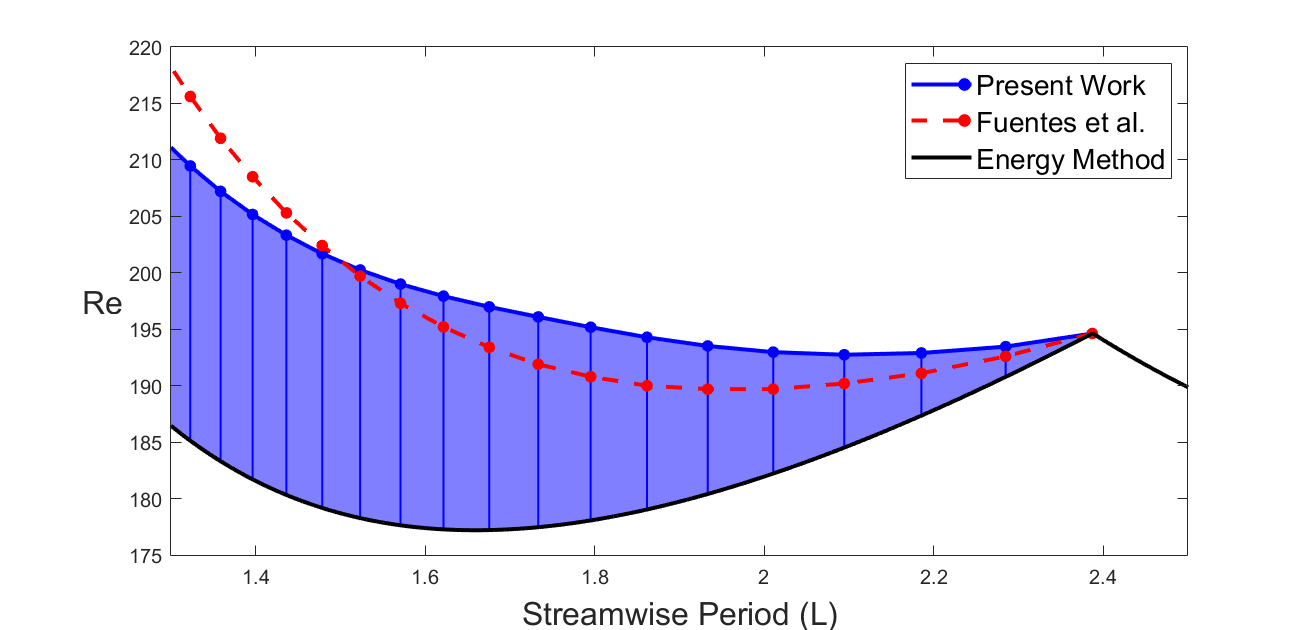}
            \caption{Global stability results for 2-D plane Couette flow at different values of the streamwise period $L$. The energy method gives global stability below the black curve. Our present approach gives global stability below the blue curve at each $L$ value where we have carried out computations, which are marked with vertical blue lines. Regions between these $L$ values are shaded for visual clarity. The method of \citet{fuentes_global_2022} gives global stability along the red curve at discrete $(L,\nRe)$ values where they carried out computations---though not below this curve---at much larger cost than our analysis (see text).}
            \label{fig:placeholder}
        \end{figure}

We present concrete stability results for both Couette and Poiseuille flow below. In each case, we use \texttt{YALMIP} \citep{lofberg_yalmip_2004,lofberg_pre_2009} to parse the problem \cref{eq:rsos} and turn it into a semidefinite program, which we then solve with \texttt{MOSEK} \citep{mosek}. We use a standard bisection search to find the maximum $\nRe$ such that $r_\mathrm{SOS}\geq 10^{-3}$. \Cref{fig:placeholder} summarizes our results for 2-D Couette flow. Our stability curve (blue) clearly improves upon the energy method curve (black), and over a range of $L$ it is also better than the curve computed by \citeauthor{fuentes_global_2022} using six energy modes (red). Our curve and that of \citeauthor{fuentes_global_2022} both coincide with the energy stability curve when $L\geq 2.387$, where a different energy-unstable mode arises. Progress in that regime requires a small modification of $V$ that we leave for future work. In order to give a concrete example of a quartic Lyapunov function, let us consider the streamwise period $L=1.659$ at which the energy Reynolds number $\nRe_E=177.2$ is minimized \citep{orr_stability_1907}. By solving the SOS problem~\cref{eq:rsos}, we deduce that $V$ of our quartic form~\cref{eq:lyapunov_primitive} is a Lyapunov functional for all $\nRe \leq 197.3$ with
\begin{equation}\label{eq:goodparams_C}
    \gamma_0 = 0.0501,\qquad \gamma_1 = 0.00680,\qquad \gamma_2 =  0.000484.
\end{equation}

For 2-D Poiseuille flow, we have used our computational procedure to verify global stability at various $L$, but the improvements over the energy method are more modest than for Couette flow. The $\nRe$ values up to which we can verify global stability of Poiseuille flow are only about 1--2\% larger than the energy stability values. Preliminary investigations with the code from \citet{iligaray_improved_2026} suggest that our relatively weak results for Poiseuille flow can be improved with the same ansatz by more carefully bounding certain modes of the tail. We leave this direction of investigation to future work. In order to give a concrete Lyapunov function for Poiseuille flow, let us consider the streamwise period $L=1.497$ at which the energy Reynolds number $\nRe_E=175.1$ is minimized. By solving the SOS problem~\cref{eq:rsos}, we deduce that \cref{eq:lyapunov_primitive} is a Lyapunov functional for $\nRe \leq 177.0$ with
\begin{equation}\label{eq:goodparams_P}
\gamma_0 = 0.0266,\qquad \gamma_1 = 0.00690,\qquad \gamma_2 =  0.00006197.
\end{equation}
		
        %Purely numerical proofs (Iligaray \& Fuentes, private communication) using our same ansatzes show that $\nRe_G>211$ for 2-D Couette flow at $L=1.659$, and $\nRe_G> 185$ for 2-D Poiseuille flow at $L=1.497$. In both cases, numerical experiments indicate that the bound cannot be further improved by lifting the ansatz \cref{eq:lyapunov_primitive}, at least without introducing additional energy modes.

        \subsection{Comparing against the strategy of Fuentes \emph{et al.}}\label{sec:comparison}
        
       Our approach to bounding $\dot V$ differs in one significant way from analogous bounds in \citet{fuentes_global_2022} and other past works. For an example, consider the term we bound using Young's inequality in~\cref{eq:bound_to_compare}. The strategy of \citeauthor{fuentes_global_2022}, roughly translated to the present notation, would be to bound the top line of~\cref{eq:bound_to_compare} by introducing an auxiliary polynomial $p_1(c_j,q_n)$ such that
        \begin{equation}\label{eq:aux_poly}
        2\|c_i\nabla\ovl{\varphi_i}\cdot Q_{\varphi_0}\| \,|q_1|  \leq p_1(c_j,q_n).
        \end{equation}
One can avoid square roots (which are incompatible with SOS computation) by squaring both sides:
        \begin{equation}\label{eq:constraint_schur1}
        [p_1(c_j,q_n)]^2 - (\vec{c}^\dagger\vec{M}\vec{c}) q_1^2\geq 0, \qquad [\vec{M}]_{ij} = 4\langle \ovl{Q_{\varphi_0}}\cdot\nabla\varphi_i,Q_{\varphi_0}\cdot\nabla\ovl{\varphi_j}\rangle,
        \end{equation}
        writing $\vec{c} = (c_1,c_2)^t$. To restore linearity in $p_1$, one observes that the left-hand side of \cref{eq:constraint_schur1} is proportional to the Schur complement of the matrix
        \begin{equation}
        \label{eq:L1}
        \vec{L}_1(c_j,q_n)=\begin{pmatrix}
            (\vec{c}^\dagger\vec{M}\vec{c})p_1(c_j,q_n) & (\vec{c}^\dagger\vec{M}\vec{c})q_1\\
            (\vec{c}^\dagger\vec{M}\vec{c})q_1& p_1(c_j,q_n)
        \end{pmatrix},
        \end{equation}
        so the condition \cref{eq:constraint_schur1} is equivalent to ensuring that $\vec{L}_1(c_j,q_n)\succeq 0$ for all $(c_j,q_n)$. This constraint is linear in the unknown coefficients of $p_1(c_j,q_n)$, so it can be strengthened into a matrix-valued SOS condition. The left-hand side of~\cref{eq:bound_to_compare} is therefore bounded above by any $p_1(c_j,q_n)$ that satisfies~\cref{eq:L1}, so one can replace this term by $p_1$ when deriving an upper bound on $\dot{V}$.
        % Meanwhile, the  $p_1(c_j,q_n)$ would take the place of the expression \cref{eq:fourthterm} in a bound of the form \cref{eq:intro_polybound}, or heuristically,
        % \begin{equation}
        %     S(c_j,q_n)\sim \dot{V}[\psi] - 2\gamma_0E[\psi] \langle \nabla\psi^a\cdot Q_{\varphi_0}\cdot\nabla\psi^b\rangle + 2\gamma_0E[\psi] p_1(c_j,q_n),
        % \end{equation}
        % which is now `one step closer' to being a polynomial in $(c_j,q_n)$. 
        Such a procedure can be repeated for every non-polynomial term in $\dot V$ that must be bounded, each time introducing another auxiliary polynomial $p_i$ and corresponding matrix SOS constraint $\vec{L}_i(c_j,q_n)\succeq 0$. This lets $\dot V$ be bounded above by an expression that is polynomial in $(c_j,q_n)$, but which contains many polynomials $p_i$ and must be accompanied by expensive matrix-valued SOS constraints.
        
In more general contexts, it is possible that the approach of \citet{fuentes_global_2022} could give sharper estimates than our use of Young's inequality, albeit at much higher computational cost. For the estimates in our present work, however, it turns out that the two approaches are equally sharp. This claim is justified by the following proposition, which we prove in \cref{app:youngs}.
        \begin{proposition}
        \label{prop:young}
            Suppose $\vec{A}\in\RR^{m_1\times n}$ and $\vec{B}\in\RR^{m_2\times n}$, and consider a non-negative, homogeneous, quadratic polynomial $p(\vec{z})$ on $\RR^n$ satisfying
            \begin{equation}\label{eq:generic_poly}
                |\vec{A}\vec{z}|\,|\vec{B}\vec{z}|\leq  p(\vec{z}),\qquad \forall \vec{z}\in\RR^n.
            \end{equation}
            If $\vec{A},\vec{B}\neq 0$, then there exists an $\eps>0$ such that
            \begin{equation}
            \label{eq:youngs prop}
                |\vec{A}\vec{z}|\,|\vec{B}\vec{z}|\leq \frac{\eps}{2}|\vec{A}\vec{z}|^2 + \frac{1}{2\eps}|\vec{B}\vec{z}|^2\leq p(\vec{z}) .
            \end{equation}
            The same statement holds if $\vec{A}\neq 0$ and $\vec{B}\vec{z}$ is everywhere replaced with $1$.
        \end{proposition}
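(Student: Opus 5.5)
The plan is to recast \cref{eq:generic_poly} as a robust matrix inequality with a norm-bounded uncertainty and then apply the classical lemma of Petersen. Petersen's lemma is itself a consequence of the S-lemma. Write $p(\vec z)=\vec z^t\vec P\vec z$ with $\vec P$ symmetric; $\vec P\succeq 0$ because $p\geq0$. The elementary identity
\begin{equation*}
|\vec a|\,|\vec b| = \max_{\vec U\in\RR^{m_1\times m_2},\,\|\vec U\|\le 1} \vec a^t\vec U\vec b ,
\end{equation*}
with $\|\cdot\|$ the operator norm, is attained by a rank-one $\vec U$. With it, \cref{eq:generic_poly} is equivalent to
\begin{equation*}
\vec P-\tfrac12\big(\vec A^t\vec U\vec B+\vec B^t\vec U^t\vec A\big)\succeq 0\qquad\text{for all }\|\vec U\|\le 1 .
\end{equation*}
The desired conclusion \cref{eq:youngs prop} is equivalent to $\vec P\succeq \frac{\eps}{2}\vec A^t\vec A+\frac{1}{2\eps}\vec B^t\vec B$ for some $\eps>0$. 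The first inequality in \cref{eq:youngs prop} is just Young's inequality \cref{eq:young}. So the statement reduces exactly to Petersen's lemma, with $\vec M=\vec A/\sqrt2$ and $\vec N=\vec B/\sqrt2$ up to the factor bookkeeping.

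To prove that lemma, I would pass to the joint variable $(\vec z,\vec w)$. For every $\vec z$ and every $\vec w$ with $|\vec w|\le|\vec B\vec z|$, there is a $\vec U$ with $\|\vec U\|\le1$ and $\vec U\vec B\vec z=\vec w$, provided $\vec B\vec z\neq0$. Hence the robust inequality says
\begin{equation*}
f_0(\vec z,\vec w)=\vec z^t\vec P\vec z-\vec z^t\vec A^t\vec w\ \ge 0 \quad\text{whenever}\quad f_1(\vec z,\vec w)=|\vec B\vec z|^2-|\vec w|^2\ge 0 .
\end{equation*}
Since $\vec B\neq0$, some point has $f_1>0$, so the S-lemma (Yakubovich) applies. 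It gives $\tau\ge0$ with $f_0-\tau f_1\succeq0$ as a quadratic form in $(\vec z,\vec w)$, that is,
\begin{equation*}
\begin{pmatrix}\vec P-\tau\vec B^t\vec B & -\tfrac12\vec A^t\\ -\tfrac12\vec A & \tau\vec I\end{pmatrix}\succeq 0 .
\end{equation*}
Because $\vec A\neq 0$, the off-diagonal block is nonzero, which forces $\tau>0$. A Schur complement then gives $\vec P\succeq\tau\vec B^t\vec B+\frac1{4\tau}\vec A^t\vec A$, and setting $\eps=1/(2\tau)$ yields \cref{eq:youngs prop}.

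For the variant with $\vec B\vec z$ replaced by $1$, $p$ is necessarily an inhomogeneous quadratic. I would homogenize by introducing $\tilde{\vec z}=(\vec z,s)$ with $\tilde{\vec B}=(0,\dots,0,1)\neq0$. I would apply the homogeneous result to $s^2p(\vec z/s)$, which is nonnegative by a limiting and sign argument, and then set $s=1$.

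The main obstacle is the middle step: checking that the S-lemma hypotheses hold and that the degenerate cases are handled. These include points with $\vec B\vec z=0$, where the constraint forces $\vec w=0$ and $f_0=p\ge0$ trivially, the need to exclude $\tau=0$, and the need to keep the homogenization nonnegative at $s=0$. If the S-lemma route proves awkward, a fallback is Sion's minimax theorem applied to $\tr(\vec P\vec X)-\frac\eps2\tr(\vec A^t\vec A\vec X)-\frac1{2\eps}\tr(\vec B^t\vec B\vec X)$ over $\eps>0$ and the density matrices $\vec X$. That route requires the lifted inequality for higher-rank $\vec X$, which is essentially what the S-lemma supplies.
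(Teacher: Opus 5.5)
Your proof is correct, and it takes a different route from the paper's.

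\textbf{The paper's route.} The paper normalizes $\vec{P}$ to $\op{diag}(\vec{I}_r,0)$ and restricts to its range. It then minimizes the top eigenvalue $\sigma(\eps)$ of $\vec{M}_\eps=\frac{\eps}{2}\vec{A}^t\vec{A}+\frac{1}{2\eps}\vec{B}^t\vec{B}$ over $\eps>0$. The first-order condition at the minimizer $\eps_*$ says that Young's inequality is tight at the leading eigenvector $\vec{z}_*$, namely $\eps_*|\vec{A}\vec{z}_*|=|\vec{B}\vec{z}_*|$. Hence $\sigma(\eps_*)=|\vec{A}\vec{z}_*|\,|\vec{B}\vec{z}_*|\le1$, i.e.\ $\vec{M}_{\eps_*}\preceq\vec{P}$. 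In effect the paper proves the minimax identity $\min_\eps\lambda_{\max}(\vec{M}_\eps)=\max_{|\vec{z}|=1}|\vec{A}\vec{z}|\,|\vec{B}\vec{z}|$ directly.

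\textbf{Your route.} You lift to $(\vec{z},\vec{w})$, invoke the lossless S-procedure, and finish with a Schur complement, which identifies the statement as Petersen's lemma. Your checks are all sound: the Slater point comes from $\vec{B}\neq0$, the case $\vec{B}\vec{z}=0$ forces $\vec{w}=0$, $\tau>0$ follows from $\vec{A}\neq0$ and the zero diagonal block, and $\eps=1/(2\tau)$ gives exactly the right constants. The ``main obstacle'' you flag is already resolved by your own arguments, so the Sion fallback is unnecessary.

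\textbf{What each buys.} The paper's argument is self-contained, using only eigenvalue perturbation. However, it tacitly needs two extra facts:
\begin{itemize}
\item $\vec{A}$ and $\vec{B}$ must vanish on $\ker\vec{P}$ for the restriction to the range to be legitimate. This follows from $\vec{A},\vec{B}\neq0$ by a scaling argument along $\ker\vec{P}$.
\item Care is needed when the top eigenvalue of $\vec{M}_{\eps_*}$ is degenerate. There $\sigma$ is not differentiable, and a one-sided path only gives a one-sided inequality. One must combine the left and right derivatives with connectedness of the unit sphere in the eigenspace to obtain a balanced $\vec{z}_*$.
\end{itemize}
Your route absorbs both issues, since the S-lemma works directly with singular $\vec{P}$ and involves no smoothness. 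Your homogenization $\tilde p(\vec{z},s)=s^2p(\vec{z}/s)$, with the limiting argument at $s=0$, also makes the variant with $\vec{B}\vec{z}$ replaced by $1$ explicit. You correctly note that this variant only has content for inhomogeneous $p$; the paper merely says it is proved analogously. The price of your approach is importing the S-lemma, a nontrivial external theorem resting on Dines' convexity of the joint range of two quadratic forms, of which this proposition is essentially a special case.
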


 %All of our applications of Young's inequality are in a form to which \cref{prop:young} applies, while about half the bounds by \citet{fuentes_global_2022} are not in such a form.
 
 Bounds of this form encompass all of our own applications of Young's inequality, as well as about half of the bounds carried out by \citet{fuentes_global_2022}. Their other bounds pose more serious difficulties for the Young's inequality approach; for instance, they construct several bounds of the form
        \begin{equation}\label{eq:bound_superbad}
            |\vec{A}\vec{z}|\,|\gamma P(\vec{z})+ \gamma'P'(\vec{z})|\le p(\vec{z}),
        \end{equation}
        where $P$ and $P'$ are fixed polynomials, $\gamma,\gamma'\in\RR$ are optimization parameters, and $p(\vec{z})$ is an auxiliary polynomial. Attempting to replace $p(\vec{z})$ with an application of Young's inequality would create quadratic dependence on $\gamma$ and $\gamma'$ that is incompatible with SOS computation. We can avoid situations like \cref{eq:bound_superbad} in the present work, because our restricted $V$ ansatz~\cref{eq:V} ensures that our equivalents of $P$ and $P'$ are linear, and we use triangle inequalities to separate (our equivalents of) $\gamma$ and $\gamma'$ and handle them individually. In any case, we believe the Young's inequality approach is more promising for both analytical and numerical investigations of global stability going forward.
         
        % We are optimistic that the techniques introduced in the present work might generalize to handle such cases, but further discussion is outside the present scope.
		
		\section{Discussion and perspectives}\label{sec:conclusion}
		We have here proposed an interpretable, quasi-analytical approach to prove global stability beyond the energy Reynolds number. To this end, we have introduced a general schematic \cref{eq:lyapunov_proposed} for candidate Lyapunov functions for incompressible flows, and we have discussed how such functionals may leverage physical properties of the system to meet the requirements of a Lyapunov function. Connecting this schematic with inspiration from the numerical results of \citet{fuentes_global_2022} and a further symmetry argument, we arrived at the simplest-possible class of candidate Lyapunov functions for 2-D shear flows: the three-parameter family \cref{eq:lyapunov_primitive}. We have used these functionals to verify global stability of 2-D plane Couette flow and Poiseuille flow at various streamwise periods $L$, for all $\nRe$ up to certain values $\nRe^*(L)>\nRe_E$. Previous efforts to surpass the energy method could verify stability only for a single $\nRe^*$ value at once, rather than for all $\nRe$ up to $\nRe^*$. In this sense, our results constitute the first lower bounds on the global stability Reynolds number $\nRe_G$---for any fluid flow---that surpass the energy method value.
		
		The present work offers a step towards a flexible and interpretable approach to improving global stability bounds in more general systems. Moreover, we believe that several ideas introduced here can be incorporated in future numerical efforts for significant cost savings. For one, a costly (and complicated) step in the algorithm of \citet{fuentes_global_2022} was reducing the space of admissible polynomials to account for system symmetries, primarily via automatic symmetry detection using the software \texttt{YALMIP} \citep{lofberg_yalmip_2004,lofberg_pre_2009}. In their formulation with real variables, \texttt{YALMIP} was only able to fully exploit streamwise translations by $L/2$, leaving the space of polynomials larger than necessary. By contrast, arbitrary streamwise translations and up-down symmetries (for both Couette and Poiseuille flow) are folded into our revised notation, yielding a far smaller, complex-valued SOS problem and eliminating most pre-processing. Similarly, another costly element in \citet{fuentes_global_2022} was the inclusion of several matrix-valued SOS constraints, similar to \cref{eq:L1}. We have shown that a large class of such bounds can be reduced to an appropriate version of Young's inequality (\cref{prop:young}), which can be incorporated in a larger SOS problem with negligible computational overhead.
        
        Finally, numerical experiments for both Couette and Poiseuille flow indicate that, even with many energy modes, Lyapunov functions within the schematic \cref{eq:lyapunov_proposed} yield the same bounds on $\nRe_G$ as fully general quartic Lyapunov functions depending on the same modes. This suggests restricting to the class \cref{eq:lyapunov_proposed}, which leads to far fewer possible monomials in the SOS constraints. Exploiting such sparsity of monomial terms can significantly reduce computational cost of SOS computations \citep{zheng_chordal_2021}, which we believe is a promising future direction.

		\subsection*{Acknowledgments}
		We thank Federico Fuentes, Giovanni Fantuzzi, Vicente Iligaray, and  Karol\'ina Sehnalov\'a for helpful discussion of this work, and Vicente and Karol\'ina in particular for careful comments on our manuscript and code. DD would like to extend his gratitude to the GFD Fellowship Program at Woods Hole Oceanographic Institute, which made the present research possible and helped shape his research more broadly. EC was supported in part by the Department of Defense Vannevar Bush Faculty Fellowship, under ONR award N00014-22-1-2790. DG was supported by the NSERC Discovery Grants Program (awards RGPIN-2018-04263 and RGPIN-2025-06823). Collaboration between DD and DG was facilitated by travel funding from the Pacific Institute for the Mathematical Sciences.
        
\renewcommand{\theHsection}{A\arabic{section}}
		\appendix

        \section{Construction of the function space $\cH_\RR$}\label{app:space}
        The function space $\cH_\RR$ defined by \cref{eq:cH} is non-standard, so we here provide the functional-analytical details of its construction. We begin with the restricted Sobolev space $H_0^1(\Omega)$, which contains functions $\phi$ that satisfy $\phi,\nabla\phi\in L^2$ and zero Dirichlet conditions on $\partial\Omega$. If $\partial\Omega$ is smooth, it can be shown that there exists a continuous map \citep[p.~50--51]{Sohr2012}
		\begin{equation}\tau_N:H_0^1(\Omega)\to H^{-1/2}(\partial\Omega),\qquad f\mapsto \hat{n}\cdot\nabla f,
        \end{equation}
		which extends the Neumann boundary trace of smooth functions. Here, $H^{-1/2}$ denotes a Sobolev space of negative order, whose elements are generalized functions. Similar statements hold for domains with corners, but we do not require them here. In particular, one can consider the kernel $\ker(\tau_N)\subset H_0^1(\Omega)$ of functions with zero Neumann conditions on $\partial\Omega$. This kernel enforces both Dirichlet and Neumann conditions on $\partial\Omega$, but we want to allow our streamfunctions to take constant, nonzero values on the two walls; physically, this corresponds to nonzero net flow through the channel. To correct this restriction, we augment our space with the single function $f(x,y) = \sin(\pi y)$. In set summation notation, we thus find
		\begin{equation}
		\cH_\RR = \ker(\tau_N)+ \op{span}\left[(x,y)\mapsto \sin(\pi y)\right].
        \end{equation}
        The space $\cH_\CC$ is simply the complexification of $\cH_\RR$.
        
        \section{Estimates on the tail of $\dot{V}$}
        \label{app:estimates}
        
        We here bound each term in $\dot{V}[\psi]$ by a polynomial expression in $(c_j,q_n)$. These estimates lead to an overall polynomial bound of the form $\dot{V}[\psi]\leq S(c_j,q_n)$, where the final expression for $S$ is reported in~\cref{eq:fullpoly}. Recalling the decomposition $\dot{V}=[\dot{V}]_a + [\dot{V}]_b$ given in \cref{eq:VaVb}, we note that the terms grouped into $[\dot V]_a$ are already polynomial in $(c_j,q_n)$, so they do not need to be estimated. There are five terms in the expression \cref{eq:VaVb} for $[\dot{V}]_b$; the fourth is treated in \cref{sec:fourthterm}, yielding the bound \cref{eq:bound4}, and we handle the remaining four in order below.
        
        \subsection[First term of Vb]{First term of $[\dot{V}]_b$}
        
        The first term in $[\dot{V}]_b$ depends on the tail $\psi^b$ only through the expression $\dot{E}[\psi^b$], so we only need to bound $\dot{E}[\psi^b]$ above. For this we recall \cref{eq:Edot_basic_bound}, noting that all $c_j$ are zero on the subspace containing $\psi^b$. This yields a polynomial upper bound on the first term of $[\dot{V}]_b$:
        \begin{equation}
        \label{eq:bound1}
            \dot{E}[\psi^b](2E + 2\gamma_0c_0 +\gamma_1) \leq 2\left(\sum_{n=0}^2\kappa_nq_n^2\right)(2E + 2\gamma_0c_0 +\gamma_1).
        \end{equation}

        \subsection[Second term of Vb]{Second term of $[\dot{V}]_b$}
        
        For the second term, we start by simplifying the expression~\cref{eq:dj} for $d_j[\psi^b]$, using the orthogonality of $\psi^a$ and $\psi^b$:
        \begin{equation}
        \label{eq:djpsib}
            d_j[\psi^b] = \langle\varphi_j,\psi^b,\nabla^2\psi^b\rangle + \langle\Psi,\nabla\varphi_j,\nabla\psi^b\rangle + \tfrac{1}{2}\langle Z,\varphi_j,\psi^b\rangle,
        \end{equation}
    In the $j=0$ case, we further apply \cref{eq:tripleprod_Qidentity} and retain only terms with compatible wavenumbers, finding
    \begin{equation}
    \label{eq:tail1}
            d_0[\psi^b] = \langle\varphi_0,\psi^b,\nabla^2\psi^b\rangle = \langle\nabla\psi^b_{1,-}\cdot Q_{\varphi_0}\cdot \nabla\psi^b_{1,+}\rangle + \langle\nabla\psi^b_{2,-}\cdot Q_{\varphi_0}\cdot \nabla\psi^b_{2,+}\rangle.
    \end{equation}
    Note that \cref{eq:tripleprod_Qidentity} gives rise to a natural `$2$-$2$-$\infty$' inequality for any fields $a,b,c$:
		\begin{equation}
        \label{eq:opnorm}
			\left|\langle a,b,\nabla^2c\rangle + \langle a,c,\nabla^2b\rangle \right| \leq \|Q_a\|_\mathrm{op}\|\nabla b\|\|\nabla c\|,
		\end{equation}
        where
        \begin{equation}
			\|Q_a\|_\mathrm{op}=\max_{(x,y)}\|Q_a(x,y)\|_{\mathrm{op},2\times 2}
		\end{equation}
		denotes the spatial maximum of the pointwise operator norm (in the sense of Euclidean norms) of the $2\times2$ matrix $Q_a(x,y)$. Since $Q_a(x,y)$ is self-adjoint, this pointwise operator norm is precisely the largest absolute value among its pointwise eigenvalues. 
        
        We can write $\varphi_0$ and $Q_{\varphi_0}$ explicitly for 2-D planar shear flows, because streamwise-constant energy modes are simply sinusoids in the wall-normal coordinate $y$. In Couette flow,
		\begin{equation}
        \varphi_0(y) = \frac{\cos(2\pi y)}{2\pi\sqrt{L}},\qquad Q_{\varphi_0} = -\frac{2\pi}{\sqrt{L}}\begin{pmatrix}
				& \cos(2\pi y)\\
				\cos(2\pi y)&
			\end{pmatrix},
		\end{equation}
        so $\|Q_{\varphi_0}\|_\mathrm{op} =2\pi L^{-1/2}$. The Poiseuille flow case is similar, and yields $\|Q_{\varphi_0}\|_\mathrm{op} =\pi L^{-1/2}$. Using the larger of the two values (for the sake of consistency), we bound the right-hand side of~\cref{eq:tail1} as in~\cref{eq:opnorm} and arrive at the following bound on the second term in $[\dot{V}]_b$:
        \begin{equation}\label{eq:bound2}
                \left|2\gamma_0 Ed_0[\psi^b]\right| \leq 4\pi L^{-1/2}|\gamma_0| E(q_1^2+q_2^2).
        \end{equation}
        We note that the bound \cref{eq:bound2} could be sharpened by solving Euler--Lagrange equations to compute the maximum of the right-hand side of~\cref{eq:tail1} over the $\psi$ components, rather than using the estimate~\cref{eq:opnorm}. We have carried out numerical experiments for Couette flow using this sharper bound, and we find that it increases the $\nRe$ values at which we verify stability, but by 1.0 or less. To keep our estimates simpler and applicable to other flows, we use~\cref{eq:bound2} for our main results.

        \subsection[Third term of Vb]{Third term of $[\dot{V}]_b$}
        
        The third term of $[\dot{V}_b]$ can be written more explicitly using \cref{eq:djpsib}:
        \begin{equation}
        \label{eq:thirdterm1}
        2\gamma_2\Re (A_{ij}\ovl{c_i}d_j[\psi^b]) = 2\gamma_2\Re [A_{ij}\ovl{c_i}(\langle\varphi_j,\psi^b,\nabla^2\psi^b\rangle+\langle \Psi,\nabla\varphi_j,\nabla\psi^b\rangle + \tfrac{1}{2}\langle Z,\varphi_j,\psi^b\rangle)].
        \end{equation}
        In the first integral on the right-hand side, we expand $\psi$ into components and use orthogonality to retain only compatible wavenumbers:
		\begin{equation}
			\langle\varphi_j,\psi^b,\nabla^2\psi^b\rangle = \langle\nabla \psi^b_0\cdot Q_{\varphi_j}\cdot\nabla\psi^b_{1,-}\rangle + \langle\nabla \psi^b_{2,-}\cdot Q_{\varphi_j}\cdot\nabla\psi^b_{1,+}\rangle + \langle\nabla \psi^b_{2,-}\cdot Q_{\varphi_j}\cdot\nabla\psi^b_{2,+}\rangle.
		\end{equation}
        Applying \cref{eq:opnorm} to this expression, followed by Young's inequality, we bound the first right-hand term in~\cref{eq:thirdterm1} as follows:
		\begin{equation}
			\begin{aligned}
				\left|2\gamma_2\Re [A_{ij}\ovl{c_i}\langle\varphi_j,\psi^b,\nabla^2\psi^b\rangle]\right| 
        &\leq 2|\gamma_2|(|c_1|\|Q_{\varphi_2}\|_\mathrm{op} + |c_2|\|Q_{\varphi_1}\|_\mathrm{op})(\sqrt{2}q_0q_1 + q_1q_2 + q_2^2) \\
		&\leq |\gamma_2|(\sqrt{2}q_0q_1 + q_1q_2 + q_2^2)\\
        & \qquad\times \left[\left(\eps_1+\tfrac{1}{\eps_1}|c_1|^2\right)\|Q_{\varphi_2}\|_\mathrm{op} + \left(\eps_2+\tfrac{1}{\eps_2}|c_2|^2\right)\|Q_{\varphi_1}\|_\mathrm{op}\right]
		\end{aligned}
		\end{equation}
		for any $\eps_1,\eps_2>0$, where we have used the identity $\|\nabla\psi^b_0\|=\sqrt{2}q_0$. One could sharpen this bound by projecting $Q_{\varphi_j}$ to particular subspaces $\cH_k\subset\cH_\CC$ before taking its operator norm, but we do not pursue this improvement here.

		We similarly estimate the second term on the right-hand side of~\cref{eq:thirdterm1} as
		\begin{equation}
			\begin{aligned}
				\left|2\gamma_2\Re [A_{ij}\ovl{c_i}\langle \Psi,\nabla\varphi_j,\nabla\psi^b\rangle]\right|&\leq2|\gamma_2|\left|\langle A_{ij}\ovl{c_i}\{\Psi,\nabla\varphi_j\},\nabla\psi^b_{1,-}\rangle\right|\\
				&\leq 2|\gamma_2|q_1\|A_{ij}\ovl{c_i}\{\Psi,\nabla\varphi_j\}\|\\
				&\leq \eps_0|\gamma_2|q_1^2 + \frac{|\gamma_2|}{\eps_0} A_{ij}A_{k\ell}\langle \{\Psi,\nabla\varphi_j\},\{\Psi,\nabla\ovl{\varphi_k}\}\rangle \ovl{c_i}c_\ell,
			\end{aligned}
		\end{equation}
		and the third term as
		\begin{equation}
			\left|\gamma_2\Re [A_{ij}\ovl{c_i}\langle Z,\varphi_j,\psi^b\rangle]\right|\leq \eps_7|\gamma_2|q_1^2 + \frac{|\gamma_2|}{4\eps_7} A_{ij}A_{k\ell}\langle \{Z,\varphi_j\},\{Z,\ovl{\varphi_k}\}\rangle \ovl{c_i}c_\ell.
		\end{equation}
		Notably, terms with $\eps_7$ can be deleted in the case of Couette flow (but not Poiseuille flow), since $Z$ is constant. Combining the three preceding estimates gives our bound on the third term of $[\dot{V}]_b$:
        \begin{equation}
        \label{eq:bound3}
            \begin{aligned}
                \big|2\gamma_2\Re &(A_{ij}\ovl{c_i}d_j[\psi^b])\big| \\
                &\leq |\gamma_2|(\sqrt{2}q_0q_1 + q_1q_2 + q_2^2)\left[\left(\eps_1+\tfrac{1}{\eps_1}|c_1|^2\right)\|Q_{\varphi_2}\|_\mathrm{op} + \left(\eps_2+\tfrac{1}{\eps_2}|c_2|^2\right)\|Q_{\varphi_1}\|_\mathrm{op}\right]\\
                &\qquad + \eps_0|\gamma_2|q_1^2 + \frac{|\gamma_2|}{\eps_0} A_{ij}A_{k\ell}\langle \{\Psi,\nabla\varphi_j\},\{\Psi,\nabla\ovl{\varphi_k}\}\rangle \ovl{c_i}c_\ell\\
                &\qquad + \eps_7|\gamma_2|q_1^2 + \frac{|\gamma_2|}{4\eps_7} A_{ij}A_{k\ell}\langle \{Z,\varphi_j\},\{Z,\ovl{\varphi_k}\}\rangle \ovl{c_i}c_\ell.
            \end{aligned}
        \end{equation}

		\subsection[Fifth term of Vb]{Fifth term of $[\dot{V}]_b$}
        
		For the fifth term of $[\dot{V}]_b$, we first rewrite the integral by expanding $\psi$ into components and using orthogonality:
		\begin{equation}
        \langle\nabla\psi^a\cdot Q_{\varphi_j}\cdot\nabla\psi^b\rangle = \ovl{c_k}\langle\nabla\varphi_k\cdot Q_{\varphi_j}\cdot\nabla\psi^b_{2,-}\rangle + c_k\langle\nabla\ovl{\varphi_k}\cdot Q_{\varphi_j}\cdot\nabla\psi^b_{0}\rangle + 2c_0\langle\nabla\varphi_0\cdot Q_{\varphi_j}\cdot\nabla\psi^b_{1,-}\rangle.
		\end{equation}
		Then we estimate the full fifth term of $[\dot{V}]_b$ by
		\begin{equation}
			\begin{aligned}
				\big|2&\gamma_2\Re A_{ij}\ovl{c_i}\langle\nabla\psi^a\cdot Q_{\varphi_j}\cdot\nabla\psi^b\rangle\big| \leq 2|\gamma_2||A_{ij}\ovl{c_i}\langle\nabla\psi^a\cdot Q_{\varphi_j}\cdot\nabla\psi^b\rangle|\\
				&\leq 2|\gamma_2| |c_1||c_k|(\|Q_{\varphi_2}\cdot\nabla\varphi_k\| q_2 + \|Q_{\varphi_2}\cdot\nabla\ovl{\varphi_k}\| \sqrt{2}q_0) + 4\gamma_2|c_1||c_0|\|Q_{\varphi_2}\cdot\nabla\varphi_0\|q_1 \\
				&\qquad + 2|\gamma_2| |c_2||c_k|(\|Q_{\varphi_1}\cdot\nabla\varphi_k\| q_2 + \|Q_{\varphi_1}\cdot\nabla\ovl{\varphi_k}\| \sqrt{2}q_0) + 4|\gamma_2||c_2||c_0|\|Q_{\varphi_1}\cdot\nabla\varphi_0\|q_1\\
				&= 2|\gamma_2 ||c_1|^2(\|Q_{\varphi_2}\cdot\nabla\varphi_1\| q_2 + \|Q_{\varphi_2}\cdot\nabla\ovl{\varphi_1}\|\sqrt{2}q_0)\\
				&\qquad + 2|\gamma_2| |c_2|^2(\|Q_{\varphi_1}\cdot\nabla\varphi_2\| q_2 + \|Q_{\varphi_1}\cdot\nabla\ovl{\varphi_2}\|\sqrt{2}q_0)\\
				&\qquad + 2|\gamma_2| |c_1||c_2|\big[(\|Q_{\varphi_1}\cdot\nabla\varphi_1\| +\|Q_{\varphi_2}\cdot\nabla\varphi_2\|)q_2 \\
				&\qquad\qquad\qquad\qquad\qquad\qquad+ (\|Q_{\varphi_1}\cdot\nabla\ovl{\varphi_1}\|+\|Q_{\varphi_2}\cdot\nabla\ovl{\varphi_2}\|)\sqrt{2}q_0)\big]\\
				&\qquad + 4|\gamma_2||c_1||c_0|\|Q_{\varphi_2}\cdot\nabla\varphi_0\|q_1 + 4|\gamma_2||c_2||c_0|\|Q_{\varphi_1}\cdot\nabla\varphi_0\|q_1.
			\end{aligned}
		\end{equation}
        We want $|c_j|$ to only appear squared in the bound, so that the full expression is polynomial in the complex variable $c_j$. Young's inequality gives
		\begin{equation}
			\begin{gathered}
				|c_1||c_0|\leq \frac{1}{2\eps_5}|c_1|^2 + \frac{\eps_5}{2}|c_0|^2,\qquad |c_2||c_0|\leq \frac{1}{2\eps_6}|c_2|^2 + \frac{\eps_6}{2}|c_0|^2,\\ |c_1||c_2|\leq \frac{1}{2\eps_4}|c_1|^2 + \frac{\eps_4}{2}|c_2|^2,
			\end{gathered}
		\end{equation}
		with which we arrive at our final bound:
        \begin{equation}
        \label{eq:bound5}
        \begin{aligned}
			\big|2\gamma_2\Re A_{ij}\ovl{c_i}&\langle\nabla\psi^a\cdot Q_{\varphi_j}\cdot\nabla\psi^b\rangle\big| \\
            &\leq 2|\gamma_2| |c_1|^2(\|Q_{\varphi_2}\cdot\nabla\varphi_1\| q_2 + \|Q_{\varphi_2}\cdot\nabla\ovl{\varphi_1}\|\sqrt{2}q_0)\\
			&\qquad + 2|\gamma_2| |c_2|^2(\|Q_{\varphi_1}\cdot\nabla\varphi_2\| q_2 + \|Q_{\varphi_1}\cdot\nabla\ovl{\varphi_2}\|\sqrt{2}q_0)\\
			&\qquad + 2|\gamma_2| \left(\tfrac{1}{2\eps_4}|c_1|^2 + \frac{\eps_4}{2}|c_2|^2\right)\big[(\|Q_{\varphi_1}\cdot\nabla\varphi_1\| +\|Q_{\varphi_2}\cdot\nabla\varphi_2\|)q_2 \\
			&\qquad \hspace{2in}+ (\|Q_{\varphi_1}\cdot\nabla\ovl{\varphi_1}\|+\|Q_{\varphi_2}\cdot\nabla\ovl{\varphi_2}\|)\sqrt{2}q_0\big]\\
			&\qquad + 4|\gamma_2|\left(\tfrac{1}{2\eps_5}|c_1|^2 + \frac{\eps_5}{2}|c_0|^2\right)\|Q_{\varphi_2}\cdot\nabla\varphi_0\|q_1\\
			&\qquad + 4|\gamma_2|\left(\tfrac{1}{2\eps_6}|c_2|^2 + \frac{\eps_6}{2}|c_0|^2\right)\|Q_{\varphi_1}\cdot\nabla\varphi_0\|q_1.
            \end{aligned}
		\end{equation}

        \subsection{Coefficients of the full polynomial}
        \label{sec:S coeff}
        
        Combining the bounds \cref{eq:bound1,eq:bound2,eq:bound3,eq:bound4,eq:bound5} on the five terms of $[\dot V]_b$, we finally obtain a polynomial bound of the form $\dot{V}[\psi]\le S(c_j,q_n)$. This $S$ takes the form given by~\cref{eq:fullpoly}, where the coefficients $B_{ij}$, $C_{ij}$, $D_{ij}$, $\kappa_n$, $\eta_j$, $\beta_{n,m}$, and $\omega$ are defined as follows. In these definitions, the indices $k$ and $\ell$ are summed over whenever they appear, but $i$ and $j$ are not.
        \begin{subequations}
        \begin{equation}
        \begin{aligned}
            B_{ij} &= A_{ij}(\lambda_j + \lambda_i) + A_{ik}[\tfrac{1}{2}\langle\varphi_k,\ovl{\varphi_j},Z\rangle + \langle\Psi,\nabla\varphi_k,\nabla\ovl{\varphi_j}\rangle ] - A_{kj}[\tfrac{1}{2}\langle\varphi_i,\ovl{\varphi_k},Z\rangle + \langle\Psi,\nabla\varphi_i,\nabla\ovl{\varphi_k}\rangle ] \\
            &\qquad + \frac{1}{\eps_0} A_{i\ell}A_{kj}\langle \{\Psi,\nabla\varphi_\ell\},\{\Psi,\nabla\ovl{\varphi_k}\}\rangle + \frac{1}{4\eps_7} A_{i\ell}A_{kj}\langle \{Z,\varphi_\ell\},\{Z,\ovl{\varphi_k}\}\rangle,
        \end{aligned}
        \end{equation}
        \begin{equation}
        C_{ij} = 2\langle\nabla\ovl{\varphi_j}\cdot Q_{\varphi_0}\cdot\nabla\varphi_i\rangle + 2 \frac{1}{\eps_3}\langle \ovl{Q_{\varphi_0}}\cdot\nabla\varphi_i,Q_{\varphi_0}\cdot\nabla\ovl{\varphi_j}\rangle,
        \end{equation}
        \begin{equation}
        D_{ij} = 2A_{ik}\langle\nabla\ovl{\varphi_j}\cdot Q_{\varphi_k}\cdot\nabla\varphi_0\rangle + 2A_{kj}\langle\nabla\varphi_i\cdot Q_{\ovl{\varphi_k}}\cdot\nabla\varphi_0\rangle,
        \end{equation}
        \begin{equation}
        \beta_{00} = \beta_{20} = 0,\qquad \beta_{10} = 4\frac{\eps_6}{2}\|Q_{\varphi_1}\cdot\nabla\varphi_0\| + 4\frac{\eps_5}{2}\|Q_{\varphi_2}\cdot\nabla\varphi_0\|,
        \end{equation}
        \begin{equation}
        \beta_{11} = 4\frac{1}{2\eps_5}\|Q_{\varphi_2}\cdot\nabla\varphi_0\|,\qquad \beta_{12} = 4\frac{1}{2\eps_6}\|Q_{\varphi_1}\cdot\nabla\varphi_0\|,
        \end{equation}
        \begin{equation}
        \beta_{01} = 2\sqrt{2}\, \|Q_{\varphi_2}\cdot\nabla\ovl{\varphi_1}\| + 2\sqrt{2}\frac{1}{2\eps_4}(\|Q_{\varphi_1}\cdot\nabla\ovl{\varphi_1}\|+\|Q_{\varphi_2}\cdot\nabla\ovl{\varphi_2}\|),
        \end{equation}
        \begin{equation}
        \beta_{02} = 2\sqrt{2}\, \|Q_{\varphi_1}\cdot\nabla\ovl{\varphi_2}\| + 2\sqrt{2}\frac{\eps_4}{2}(\|Q_{\varphi_1}\cdot\nabla\ovl{\varphi_1}\|+\|Q_{\varphi_2}\cdot\nabla\ovl{\varphi_2}\|),
        \end{equation}
        \begin{equation}
        \beta_{21} = 2\, \|Q_{\varphi_2}\cdot\nabla\varphi_1\| + 2\frac{1}{2\eps_4}(\|Q_{\varphi_1}\cdot\nabla\varphi_1\| +\|Q_{\varphi_2}\cdot\nabla\varphi_2\|),
        \end{equation}
        \begin{equation}
        \beta_{22} = 2\, \|Q_{\varphi_1}\cdot\nabla\varphi_2\| + 2\frac{\eps_4}{2}(\|Q_{\varphi_1}\cdot\nabla\varphi_1\| +\|Q_{\varphi_2}\cdot\nabla\varphi_2\|),
        \end{equation}

        \begin{equation}
        \tau_0=0,\qquad\tau_1=4\pi L^{-1/2} + 2\eps_3,\qquad \tau_2=4\pi L^{-1/2},
        \end{equation}
        \begin{equation}
        \eta_0 = (\eps_1\|Q_{\varphi_2}\|_\mathrm{op} + \eps_2\|Q_{\varphi_1}\|_\mathrm{op}),\qquad \eta_1 = \frac{1}{\eps_1}\|Q_{\varphi_2}\|_\mathrm{op},\qquad \eta_2 = \frac{1}{\eps_2}\|Q_{\varphi_1}\|_\mathrm{op}.
        \end{equation}
        \begin{equation}
        \omega = \eps_0+\eps_7
        \end{equation}
        \end{subequations}

        \section{Proof of \cref{prop:young}}
        \label{app:youngs}

        We here prove \cref{prop:young}, which demonstrates that Young's inequality always gives an optimal polynomial bound on the non-polynomial terms in our analysis. 
        \begin{proof}
            The first inequality in~\cref{eq:youngs prop} is Young's inequality, which holds for all $\eps>0$. We must prove that there exists $\eps>0$ for which the second inequality holds. Write $p(\vec{z}) = \vec{z}^t\vec{P}\vec{z}$, $\vec{A}^t\vec{A} = \vec{Q}$, and $\vec{B}^t\vec{B}=\vec{R}$, and let $r=\op{rank}\vec{P}$. Without loss of generality, we can choose coordinates such that
            \[\vec{P}=\begin{pmatrix}
                \vec{I}_{r} & 0\\ 0 & 0
            \end{pmatrix},\]
            scaling $\vec{P}$, $\vec{A}$, and $\vec{B}$ as appropriate. We can also restrict attention to the first $r$ coordinates, so that \cref{eq:generic_poly} becomes
            \begin{equation}\label{eq:generic_poly_redone}
            |\vec{z}|^2 \geq |\vec{A}\vec{z}|\,|\vec{B}\vec{z}|
            \end{equation}
            Now, write
            \begin{equation}
                \vec{M}_\eps = \frac{\eps}{2}\vec{Q} + \frac{1}{2\eps}\vec{R},\qquad \sigma(\eps)=\max_{|\vec{z}|=1}\vec{z}^t\vec{M}_\eps\vec{z}.
            \end{equation}
            Since $\sigma(\eps)$ diverges when $\eps\to 0$ and when $\eps\to\infty$, it takes a minimum at some finite $\eps_*>0$. Let $\vec{z}_\eps\in\RR^r$ be the leading (unit) eigenvector of $\vec{M}_{\eps}$, and $\vec{z}_*=\vec{z}_{\eps_*}$; even in the (non-generic) case that the eigenspace is degenerate at $\eps_*$, one can choose $\vec{z}_*$ as the endpoint of a smooth path $\vec{z}_\eps$ for $\eps_*+\delta>\eps>\eps_*$ and sufficiently small $\delta>0$. We differentiate
            \begin{equation}
            \begin{aligned}
                0&=\frac{d}{d\eps}\bigg|_{\eps=\eps_*}\sigma(\eps) \\
                &= \frac{d}{d\eps}\bigg|_{\eps=\eps_*}\vec{z}_\eps^t\vec{M}_{\eps}\vec{z}_\eps \\
                &= 2\sigma(\eps_*)\vec{z}_*^t\frac{d}{d\eps}\bigg|_{\eps=\eps_*}\vec{z}_\eps + \frac{1}{2}\vec{z}^t_*\vec{Q}\vec{z}_* - \frac{1}{2\eps^2_*}\vec{z}^t_* \vec{R}\vec{z}_*\\
                &= \frac{1}{2}\vec{z}^t_*\vec{Q}\vec{z}_* - \frac{1}{2\eps^2_*}\vec{z}^t_* \vec{R}\vec{z}_*,
                \end{aligned}
            \end{equation}
            noting that $|\vec{z}_\eps|=1$ for all $\eps$. In particular, neither term can vanish, or they both must, and one would have $\sigma(\eps_*)=0$. We can thus write
            \[\eps_* = \sqrt{\frac{\vec{z}^t_*\vec{R}\vec{z}_*}{\vec{z}^t_*\vec{Q}\vec{z}_*}}\qquad \implies\qquad \sigma(\eps_*)=\vec{z}_*^t\vec{M}_{\eps_*}\vec{z}_* = |\vec{A}\vec{z}_*|\,|\vec{B}\vec{z}_*|\leq 1,\]
            employing \cref{eq:generic_poly_redone} in the final step. The second inequality in~\cref{eq:youngs prop} then follows, which proves the main claim. The alternate version of the proposition, where $\vec{B}\vec{z}$ is replaced with $1$, is proved analogously.
        \end{proof}
		
		\bibliographystyle{jfm}
		\bibliography{GlobalStability}
		
		%% End of file `jfm.bib'.

	\end{document}

%% file: dave_notation.tex
\usepackage{mathtools, amsfonts, amssymb, amstext,graphicx,xcolor,tikz,tkz-euclide, booktabs, tensor, empheq,subcaption,lipsum,float,empheq,tcolorbox,enumitem}
\usepackage{algpseudocode}

\usepackage[capitalize]{cleveref}
\crefformat{equation}{(#2#1#3)}

\newcommand{\eps}{\varepsilon}
\newcommand{\ovl}[1]{{\overline{#1}}}
\newcommand{\op}{\operatorname}
\renewcommand{\d}{\mathrm{d}}

\newcommand{\CC}{\mathbb{C}}

\newcommand{\RR}{\mathbb{R}}

\newcommand{\ZZ}{\mathbb{Z}}

\newcommand{\cE}{\mathcal{E}}

\newcommand{\cH}{\mathcal{H}}

\newcommand{\cL}{\mathcal{L}}

\newcommand{\cR}{\mathcal{R}}

\newcommand{\nRe}{\mathit{Re}} % Reynolds
\renewcommand{\vec}[1]{\mathbf{#1}}

\newtheorem{lemma}{Lemma}
\newtheorem{proposition}{Proposition}

\newtheorem{definition1}{Definition}[section]
\newtheorem{remark1}{Remark}[section]

\makeatletter
\newenvironment{definition}[1][]{
  \begin{definition1}\normalfont{}#1
}{\end{definition1}}
\newenvironment{remark}[1][]{
  \begin{remark1}\normalfont{}#1
}{\end{remark1}}
\makeatother

\newtcolorbox{remarkbox}{
  colback=white,
  colframe=black,
  boxrule=1pt,
  arc=0pt,
  left=6pt,
  right=6pt,
  top=0pt,
  bottom=6pt,
}

\let\oldremark\remark
\let\endoldremark\endremark

\renewenvironment{remark}
  {\begin{remarkbox}\oldremark}
  {\endoldremark\end{remarkbox}}